\numberwithin{equation}{section}
\newcommand{\reals}{\mathbb{R}}
\newcommand{\naturals}{\mathbb{N}}
\newcommand{\pr}{\mathbb{P}}        
\newcommand{\ex}{\mathbb{E}}        
\newcommand{\var}{\textnormal{Var}} 
\newcommand{\ind}{\boldsymbol{1}} 
\newcommand{\X}{X}
\newcommand{\pairs}{\mathcal{S}}
\newcommand{\countries}{\mathcal{C}}
\newcommand{\intervals}{\mathcal{F}}
\newcommand{\indexset}{\mathcal{M}}
\theoremstyle{plain}
\newtheorem{theoremA}{Theorem}[section]
\newtheorem{propA}{Proposition}[section]
\newtheorem{corollaryA}{Corollary}[section]
\newtheorem{lemmaS}{Lemma}[section]
\newcommand{\lefteqno}{\let\veqno\@@leqno}
\newcommand{\heading}[3]
{  \setcounter{page}{1}
   \begin{center}


   {\LARGE \textbf{#1}}
   \vspace{0.25cm}

   {\LARGE \textbf{#2}}
   \vspace{0.25cm}

   {\LARGE \textbf{#3}}
   \end{center}
}
\newcommand{\authors}[4]
{  
   \begin{center}
      \begin{minipage}[c][1.5cm][c]{5.5cm}
      \begin{center} 
      {\large #1} 
      \vspace{0.1cm}
      
      #2 
      \end{center}
      \end{minipage}
      \begin{minipage}[c][1.5cm][c]{5.5cm}
      \begin{center} 
      {\large #3}
      \vspace{0.1cm}
      
      #4 
      \end{center}
      \end{minipage}
   \end{center}
}
\begin{document}

\heading{Nonparametric comparison}{of epidemic time trends:}{the case of COVID-19}

\authors{Marina Khismatullina\renewcommand{\thefootnote}{1}\footnotemark[1]}{University of Bonn}{Michael Vogt\renewcommand{\thefootnote}{2}\footnotemark[2]}{University of Bonn} 
\footnotetext[1]{Corresponding author. Address: Bonn Graduate School of Economics, University of Bonn, 53113 Bonn, Germany. Email: \texttt{marina.k@uni-bonn.de}.}
\renewcommand{\thefootnote}{2}
\footnotetext[2]{Address: Department of Economics and Hausdorff Center for Mathematics, University of Bonn, 53113 Bonn, Germany. Email: \texttt{michael.vogt@uni-bonn.de}.}
\renewcommand{\thefootnote}{\arabic{footnote}}
\setcounter{footnote}{2}

\vspace{-0.85cm}

\renewcommand{\abstractname}{}
\begin{abstract}
\noindent The COVID-19 pandemic is one of the most pressing issues at present. A question which is particularly important for governments and policy makers is the following: Does the virus spread in the same way in different countries? Or are there significant differences in the development of the epidemic? In this paper, we devise new inference methods that allow to detect differences in the development of the COVID-19 epidemic across countries in a statistically rigorous way. In our empirical study, we use the methods to compare the outbreak patterns of the epidemic in a number of European countries.
\end{abstract}

\renewcommand{\baselinestretch}{1.2}\normalsize

\noindent \textbf{Key words:} simultaneous hypothesis testing; multiscale test; time trend; panel data; COVID-19.

\noindent \textbf{JEL classifications:} C12; C23; C54.


\section{Introduction}

There are many questions surrounding the current COVID-19 pandemic that are not well understood yet. A question which is particularly important for governments and policy makers is the following: How do the outbreak patterns of COVID-19 compare across countries? Are the time trends of daily new infections more or less the same across countries, or is the virus spreading differently in different regions of the world? Identifying differences between countries may help, for instance, to better understand which government policies have been more effective in containing the virus than others. The main aim of this paper is to develop new inference methods that allow to detect differences between time trends of COVID-19 infections in a statistically rigorous way.

Let $\X_{it}$ be the number of new infections on day $t$ in country $i$ and suppose we observe a sample of data $\mathcal{\X}_i = \{ \X_{it}: 1 \le 1 \le T \}$ for $n$ different countries $i$. In order to make the data comparable across countries, we take the starting date $t=1$ to be the day of the $100$th confirmed case in each country. This way of ``normalizing'' the data is common practice \citep[cp.\ e.g.][]{Cohen2020}. A simple way to model the count data $\X_{it}$ is to use a Poisson distribution. Specifically, we may assume that the random variables $\X_{it}$ are Poisson distributed with time-varying intensity parameter $\lambda_i(t/T)$, that is, $\X_{it} \sim P_{\lambda_i(t/T)}$. Since $\lambda_i(t/T) = \ex[\X_{it}] = \var(\X_{it})$, we can model the observations $\X_{it}$ by the nonparametric regression equation 
\begin{equation}\label{eq:model-intro}
\X_{it} = \lambda_i\Big(\frac{t}{T}\Big) + u_{it} 
\end{equation}
for $1 \le t \le T$, where $u_{it} = \X_{it} - \ex[\X_{it}]$ with $\ex[u_{it}] = 0$ and $\var(u_{it}) = \lambda_i(t/T)$. As usual in nonparametric regression \citep[cp.][]{Robinson1989}, we let the regression function $\lambda_i$ in model \eqref{eq:model-intro} depend on rescaled time $t/T$ rather than on real time $t$. Hence, $\lambda_i: [0,1] \rightarrow \reals$ can be regarded as a function on the unit interval, which allows us to estimate it by standard techniques from nonparametric regression. Since $\lambda_i$ is a function of rescaled time $t/T$, the variables $\X_{it}$ in model \eqref{eq:model-intro} depend on the time series length $T$ in general, that is, $\X_{it} = \X_{it,T}$. To keep the notation simple, we however suppress this dependence throughout the paper. In Section \ref{sec:model}, we introduce the model setting in detail which underlies our analysis. As we will see there, it is a generalized version of the Poisson model \eqref{eq:model-intro}.

In model \eqref{eq:model-intro}, the time trend of new COVID-19 infections in country $i$ is described by the intensity function $\lambda_i$ of the underlying Poisson distribution. Hence, the question whether the time trends are comparable across countries amounts to the question whether the intensity functions $\lambda_i$ have the same shape across countries $i$. In this paper, we construct a multiscale test which allows to \textit{identify} and \textit{locate} the differences between the functions $\lambda_i$. More specifically, let $\intervals = \{ \mathcal{I}_k \subseteq[0,1] : 1 \le k \le K \}$ be a family of (rescaled) time intervals $\mathcal{I}_k$ and let $H_0^{(ijk)}$ be the hypothesis that the functions $\lambda_i$ and $\lambda_j$ are the same on the interval $\mathcal{I}_k$, that is, 
\[ H_0^{(ijk)}: \lambda_i(w) = \lambda_j(w) \text{ for all } w \in \mathcal{I}_k. \]
We design a method to test the hypothesis $H_0^{(ijk)}$ \textit{simultaneously} for all pairs of countries $i$ and $j$ under consideration and for all intervals $\mathcal{I}_k$ in the family $\intervals$. The main theoretical result of the paper shows that the method controls the familywise error rate, that is, the probability of wrongly rejecting at least one null hypothesis $H_0^{(ijk)}$. As we will see, this allows us to make simultaneous confidence statements of the following form for a given significance level $\alpha \in (0,1)$: 
\begin{center}
\begin{minipage}[c][1.25cm][c]{13cm}
\textit{With probability at least $1-\alpha$, the functions $\lambda_i$ and $\lambda_j$ differ on the interval $\mathcal{I}_k$ for every $(i,j,k)$ for which the test rejects $H_0^{(ijk)}$.} 
\end{minipage}
\end{center}
Hence, the method allows us to make simultaneous confidence statements (a) about which time trend functions differ from each other and (b) about where, that is, in which time intervals $\mathcal{I}_{k}$ they differ.

Even though our multiscale test is motivated by the current COVID-19 crisis, its applicability is by no means restricted to this specific event. It is a general method to compare nonparametric trends in epidemiological (count) data. It thus contributes to the literature on statistical tests for equality of nonparametric regression and trend curves. Examples of such tests can be found in \cite{HaerdleMarron1990}, \cite{Hall1990}, \cite{King1991}, \cite{Delgado1993}, \cite{Kulasekera1995}, \cite{YoungBowman1995}, \cite{MunkDette1998}, \cite{Lavergne2001}, \cite{NeumeyerDette2003} and \cite{Pardo-Fernandez2007}. More recent approaches were developed in \cite{DegrasWu2012}, \cite{Zhang2012}, \cite{Hidalgo2014} and \cite{ChenWu2018}. Compared to existing methods, our test has the following crucial advantage: it is much more informative. Most existing procedures allow to test \textit{whether} the regression or trend curves under consideration are all the same or not. However, they do not allow to infer \textit{which} curves are different and \textit{where} (that is, in which parts of the support) they differ. Our multiscale approach, in contrast, conveys this information. Indeed, it even allows to make rigorous confidence statements about which curves $\lambda_i$ are different and where they differ. To the best of our knowledge, there is no other method available in the literature which allows to make such simultaneous confidence statements. As far as we know, the only other multiscale test for comparing trend curves has been developed in \cite{Park2009}. However, their analysis is mainly methodological and not backed up by a general theory. In particular, theory is only available for the special case $n = 2$. Moreover, the theoretical results are only valid under very severe restrictions on the family of time intervals $\mathcal{F}$.

The paper is structured as follows. As already mentioned above, Section \ref{sec:model} details the model setting which underlies our analysis. The multiscale test is developed step by step in Section \ref{sec:test}. To keep the presentation as clear as possible, the technical details are deferred to the Appendix and the Supplementary Material. Section \ref{sec:empirics} contains the empirical part of the paper. There, we run some simulation experiments to demonstrate that the multiscale test has the formal properties predicted by the theory. Moreover, we use the test to compare the outbreak patterns of the COVID-19 epidemic in a number of European countries. 

\section{Model setting}\label{sec:model}

As already discussed in the Introduction, the assumption that $\X_{it} \sim P_{\lambda_i(t/T)}$ leads to a nonparametric regression model of the form 
\begin{equation}\label{eq:model-Poisson}
\X_{it} = \lambda_i\Big(\frac{t}{T}\Big) + u_{it} \quad \text{with} \quad u_{it} = \sqrt{\lambda_i\Big(\frac{t}{T}\Big)} \eta_{it}, 
\end{equation}
where $\eta_{it}$ has zero mean and unit variance. In this model, both the mean and the variance are described by the same function $\lambda_i$. In empirical applications, however, the variance often tends to be much larger than the mean. To deal with this issue, which has been known for a long time in the literature \citep{Cox1983} and which is commonly called overdispersion, so-called quasi-Poisson models \citep{McCullagh1989, Efron1986} are frequently used. In our context, a quasi-Poisson model of $\X_{it}$ has the form 
\begin{equation}\label{eq:model}
\X_{it} = \lambda_i\Big(\frac{t}{T}\Big) + \varepsilon_{it} \quad \text{with} \quad \varepsilon_{it} = \sigma \sqrt{\lambda_i\Big(\frac{t}{T}\Big)} \eta_{it}, 
\end{equation}
where $\sigma$ is a scaling factor that allows the variance to be a multiple of the mean function $\lambda_i$. In what follows, we assume that the observed data $\X_{it}$ are produced by model \eqref{eq:model}, where the noise residuals $\eta_{it}$ have zero mean and unit variance but we do not impose any further distributional assumptions on them.

Poisson and quasi-Poisson models are often used in the literature on epidemic modelling. \cite{De2020}, for example, assume that the observed COVID-19 case count in country $i$ follows a Poisson distribution with parameter $\lambda_i$ being a linear function of some covariate $Z_i$, that is, $\lambda_i = \beta Z_i$. \cite{Pellis2020} consider a quasi-Poisson model for the number of new COVID-19 cases. They in particular examine (a) a version of the model where the mean function is parametrically restricted to be exponentially growing with a constant growth rate and (b) a version where the mean function is modelled nonparametrically by splines. \cite{Tobias2020} analyze data on the accumulated number of cases using quasi-Poisson regression, where the mean function is modelled parametrically as a piecewise linear curve with known change points.

In order to derive our theoretical results, we impose the following regularity conditions on model \eqref{eq:model}:
\begin{enumerate}[label=(C\arabic*),leftmargin=1.0cm]
\item \label{C2} The functions $\lambda_i$ are uniformly Lipschitz continuous, that is, $|\lambda_i(u) - \lambda_i(v)| \le L |u-v|$ for all $u, v \in [0,1]$, where the constant $L$ does not depend on $i$. Moreover, they are uniformly bounded away from zero and infinity, that is, there exist constants $\lambda_{\min}$ and $\lambda_{\max}$ with $0 \le \lambda_{\min} \le \min_{w \in [0,1]} \lambda_i(w) \le \max_{w \in [0,1]} \lambda_i(w) \le \lambda_{\max} < \infty$ for all $i$. 
\item \label{C1} The random variables $\eta_{it}$ are independent both across $i$ and $t$. Moreover, for any $i$ and $t$, it holds that $\ex[\eta_{it}] = 0$, $\ex[\eta_{it}^2] = 1$ and $\ex[|\eta_{it}|^\theta] \le C_\theta < \infty$ for some $\theta > 4$. 
\end{enumerate}
\pagebreak
\ref{C2} imposes some standard-type regularity conditions on the functions $\lambda_i$. In particular, the functions are assumed to be smooth, bounded from above and bounded away from zero. The latter restriction is required because the noise variance in model \eqref{eq:model} equals zero if $\lambda_i$ is equal to zero. Since we normalize our test statistics by an estimate of the noise variance as detailed in Section \ref{sec:test}, we need this variance and thus the functions $\lambda_i$ to be bounded away from zero. \ref{C1} assumes the noise terms $\eta_{it}$ to fulfill some mild moment conditions and to be independent both across countries $i$ and time $t$. In the current COVID-19 crisis, independence across countries $i$ seems to be a fairly reasonable assumption due to severe travel restrictions, the closure of borders, etc. Independence across time $t$ is more debatable, but it is by no means unreasonable in our model framework: The time series process $\mathcal{\X}_i = \{X_{it}: 1 \le t \le T\}$ produced by model \eqref{eq:model} is nonstationary for each $i$. Specifically, both the mean $\ex[X_{it}] = \lambda_i(t/T)$ and the variance $\var(X_{it}) = \sigma^2 \lambda_i(t/T)$ are time-varying. A well-known fact in the time series literature is that nonstationarities such as a time-varying mean may produce spurious sample autocorrelations \citep[cp.\ e.g.][]{MikoschStarica2004, Fryzlewicz2008}. Hence, the observed persistence of a time series (captured by the sample autocorrelation function) may be due to nonstationarities rather than real autocorrelations. This insight has led researchers to prefer simple nonstationary models over intricate stationary time series models in some application areas such as finance \citep[cp.][]{MikoschStarica2000, MikoschStarica2004, Fryzlewicz2006, HafnerLinton2010}. In a similar vein, our model accounts for the persistence in the observed time series $\mathcal{\X}_i$ via nonstationarities rather than autocorrelations in the error terms.

\section{The multiscale test}\label{sec:test}

Let $\pairs \subseteq \{ (i,j): 1 \le i < j \le n \}$ be the set of all pairs of countries $(i,j)$ whose trend functions $\lambda_i$ and $\lambda_j$ we want to compare. Moreover, as already introduced above, let $\intervals = \{ \mathcal{I}_k: 1 \le k \le K \}$ be the family of (rescaled) time intervals under consideration. Finally, write $\indexset := \pairs \times \{1,\ldots,K\}$ and let $p := |\indexset|$ be the cardinality of $\indexset$. In this section, we devise a method to test the null hypothesis $H_0^{(ijk)}$ simultaneously for all pairs of countries $(i,j) \in \pairs$ and all time intervals $\mathcal{I}_k \in \intervals$, that is, for all $(i,j,k) \in \indexset$. The value $p = |\indexset|$ is the dimensionality of the simultaneous test problem we are dealing with. It amounts to the number of tests that we carry out simultaneously. As shown by our theoretical results in the Appendix, $p$ may be much larger than the time series length $T$, which means that the simultaneous test problem under consideration can be very high-dimensional.

\subsection{Construction of the test statistics}\label{subsec:test:test}

A statistic to test the hypothesis $H_0^{(ijk)}$ for a given triple $(i,j,k)$ can be constructed as follows. To start with, we introduce the expression 
\[ \hat{s}_{ijk,T} = \frac{1}{\sqrt{Th_k}} \sum\limits_{t=1}^T \ind\Big(\frac{t}{T} \in \mathcal{I}_k\Big) (\X_{it} - \X_{jt}), \]
where $h_k$ is the length of the time interval $\mathcal{I}_k$, $\ind(\cdot)$ denotes the indicator function and $\ind(t/T \in \mathcal{I}_k)$ can be regarded as a rectangular kernel weight. A simple application of the law of large numbers yields that $\hat{s}_{ijk,T}/\sqrt{Th_k} = (Th_k)^{-1} \sum_{t=1}^T \ind(t/T \in \mathcal{I}_k) \{\lambda_i(t/T) - \lambda_j(t/T)\} + o_p(1)$ for any fixed pair of countries $(i,j)$. Hence, the statistic $\hat{s}_{ijk,T}/\sqrt{Th_k}$ estimates the average distance between the functions $\lambda_i$ and $\lambda_j$ on the interval $\mathcal{I}_k$. Under \ref{C1}, it holds that
\begin{align*}
\nu_{ijk,T}^2 := \var(\hat{s}_{ijk,T}) 
 & = \frac{\sigma^2}{Th_k} \sum\limits_{t=1}^T \ind\Big(\frac{t}{T} \in \mathcal{I}_k\Big) \Big\{ \lambda_i\Big(\frac{t}{T}\Big) + \lambda_j\Big(\frac{t}{T}\Big) \Big\}. 
\end{align*}
In order to normalize the variance of the statistic $\hat{s}_{ijk,T}$, we scale it by an estimator of $\nu_{ijk,T}$. In particular, we estimate $\nu_{ijk,T}^2$ by 
\[ \hat{\nu}_{ijk,T}^2 = \frac{\hat{\sigma}^2}{Th_k} \sum\limits_{t=1}^T \ind\Big(\frac{t}{T} \in \mathcal{I}_k\Big) \{ \X_{it} + \X_{jt} \}, \]
where $\hat{\sigma}^2$ is defined as follows: For each country $i$, let 
\begin{align*}
\hat{\sigma}_i^2 = \frac{\sum_{t=2}^T (\X_{it}-\X_{it-1})^2}{2 \sum_{t=1}^T \X_{it}}
\end{align*}
and set $\hat{\sigma}^2 = |\countries|^{-1} \sum_{i \in \countries} \hat{\sigma}_i^2$ with $\countries = \{ \ell: \ell = i$ or $\ell = j$ for some $(i,j) \in \pairs \}$ denoting the set of countries that are taken into account by our test. The idea behind the estimator $\hat{\sigma}_i^2$ is as follows: Since $\lambda_i$ is Lipschitz continuous,  
\[ \X_{it} - \X_{it-1} = \sigma \sqrt{\lambda_i\Big(\frac{t}{T}\Big)} (\eta_{it} - \eta_{it-1}) + r_{it}, \]
where $|r_{it}| \le C(1+|\eta_{it-1}|)/T$ with a sufficiently large constant $C$. This suggests that $T^{-1} \sum_{t=2}^T (\X_{it} - \X_{it-1})^2 = 2 \sigma^2 \{ T^{-1} \sum_{t=2}^T \lambda_i(t/T) \} + o_p(1)$. Moreover, since $T^{-1} \sum_{t=1}^T \X_{it} = T^{-1} \sum_{t=1}^T \lambda_i(t/T) + o_p(1)$, we expect that $\hat{\sigma}_i^2 = \sigma^2 + o_p(1)$ for any $i$ and thus $\hat{\sigma}^2 = \sigma^2 + o_p(1)$. In Lemma \ref{lemmaS1} of the Supplementary Material, we formally show that $\hat{\sigma}^2$ is a consistent estimator of $\sigma^2$ under our regularity conditions. Normalizing the statistic $\hat{s}_{ijk,T}$ by the estimator $\hat{\nu}_{ijk,T}$ yields the expression 
\begin{equation}\label{eq:stat}
\hat{\psi}_{ijk,T} := \frac{\hat{s}_{ijk,T}}{\hat{\nu}_{ijk,T}} = \frac{\sum\nolimits_{t=1}^T \ind(\frac{t}{T} \in \mathcal{I}_k) (\X_{it} - \X_{jt})}{ \hat{\sigma} \{ \sum\nolimits_{t=1}^T \ind(\frac{t}{T} \in \mathcal{I}_k) (\X_{it} + \X_{jt}) \}^{1/2}}, 
\end{equation}
which serves as our test statistic of the hypothesis $H_0^{(ijk)}$. For later reference, we additionally introduce the statistic 
\begin{equation}\label{eq:stat0}
\hat{\psi}_{ijk,T}^0 = \frac{\sum\nolimits_{t=1}^T \ind(\frac{t}{T} \in \mathcal{I}_k) \, \sigma \overline{\lambda}_{ij}^{1/2}(\frac{t}{T}) (\eta_{it} - \eta_{jt})}{ \hat{\sigma} \{ \sum\nolimits_{t=1}^T \ind(\frac{t}{T} \in \mathcal{I}_k) (\X_{it} + \X_{jt}) \}^{1/2}}
\end{equation}
with $\overline{\lambda}_{ij}(u) = \{ \lambda_i(u) + \lambda_j(u) \}/2$, which is identical to $\hat{\psi}_{ijk,T}$ under $H_0^{(ijk)}$.

\subsection{Construction of the test}

Our multiscale test is carried out as follows: For a given significance level $\alpha \in (0,1)$ and each $(i,j,k) \in \indexset$, we reject $H_0^{(ijk)}$ if 
\[ |\hat{\psi}_{ijk,T}| > c_{ijk,T}(\alpha), \]
where $c_{ijk,T}(\alpha)$ is the critical value for the $(i,j,k)$-th test problem. The critical values $c_{ijk,T}(\alpha)$ are chosen such that the familywise error rate (FWER) is controlled at level $\alpha$, which is defined as the probability of wrongly rejecting $H_0^{(ijk)}$ for at least one $(i,j,k)$. More formally speaking, for a given significance level $\alpha \in (0,1)$, the FWER is 
\begin{align*}
\text{FWER}(\alpha) 
 & = \pr \Big( \exists (i,j,k) \in \indexset_0: |\hat{\psi}_{ijk,T}| > c_{ijk,T}(\alpha) \Big) \\
 & =  1 - \pr \Big( \forall (i,j,k) \in \indexset_0: |\hat{\psi}_{ijk,T}| \le c_{ijk,T}(\alpha) \Big) \\
 & = 1 - \pr\Big( \max_{(i,j,k) \in \indexset_0} |\hat{\psi}_{ijk,T}| \le c_{ijk,T}(\alpha) \Big), 
\end{align*}
where $\indexset_0 \subseteq \indexset$ is the set of triples $(i,j,k)$ for which $H_0^{(ijk)}$ holds true.

There are different ways to construct critical values $c_{ijk,T}(\alpha)$ that ensure control of the FWER at level $\alpha$. In the traditional approach, the same critical value $c_T(\alpha) = c_{ijk,T}(\alpha)$ is used for all $(i,j,k)$. In this case, controlling the FWER at the level $\alpha$ requires to determine the critical value $c_T(\alpha)$ such that
\begin{equation}\label{eq:FWER-tilde}
\text{FWER}(\alpha) = 1 - \pr\Big( \max_{(i,j,k) \in \indexset_0} |\hat{\psi}_{ijk,T}| \le c_T(\alpha) \Big) \le \alpha. 
\end{equation}
This can be achieved by choosing $c_T(\alpha)$ as the $(1-\alpha)$-quantile of the statistic 
\[ \tilde{\Psi}_T = \max_{(i,j,k) \in \indexset} |\hat{\psi}_{ijk,T}^0|, \]
where $\hat{\psi}_{ijk,T}^0$ was introduced in \eqref{eq:stat0}. (Note that both the statistic $\tilde{\Psi}_T$ and the quantile $c_T(\alpha)$ depend on the dimensionality $p$ of the test problem in general. To keep the notation simple, we however suppress this dependence throughout the paper. We use the same convention for all other quantities that are defined in the sequel.)

A more modern approach assigns different critical values $c_{ijk,T}(\alpha)$ to the test problems $(i,j,k)$. In particular, the critical value for the hypothesis $H_0^{(ijk)}$ is allowed to depend on the length $h_k$ of the time interval $\mathcal{I}_k$, that is, on the scale of the test problem. A general approach to construct scale-dependent critical values was pioneered by \cite{DuembgenSpokoiny2001} and has been used in many other studies since then; cp.\ for example \cite{Rohde2008}, \cite{DuembgenWalther2008}, \cite{RufibachWalther2010}, \cite{SchmidtHieber2013}, \cite{EckleBissantzDette2017} and \cite{Dunker2019}. In our context, the approach of \cite{DuembgenSpokoiny2001} leads to the critical values 
\begin{equation*}
c_{ijk,T}(\alpha) = c_T(\alpha,h_k) := b_k + q_T(\alpha)/a_k, 
\end{equation*}
where $a_k = \{\log(e/h_k)\}^{1/2} / \log \log(e^e / h_k)$ and $b_k = \sqrt{2 \log(1/h_k)}$ are scale-dependent constants and the quantity $q_T(\alpha)$ is determined by the following consideration: Since 
\begin{align}
\text{FWER}(\alpha)
  & = \pr \Big( \exists (i,j,k) \in \indexset_0: |\hat{\psi}_{ijk,T}| > c_T(\alpha,h_k) \Big) \nonumber \\
 & =  1 - \pr \Big( \forall (i,j,k) \in \indexset_0: |\hat{\psi}_{ijk,T}| \le c_T(\alpha,h_k) \Big) \nonumber \\
 & =  1 - \pr \Big( \forall (i,j,k) \in \indexset_0: a_k \big(|\hat{\psi}_{ijk,T}| - b_k\big) \le q_T(\alpha) \Big) \nonumber \\
 & = 1 - \pr\Big( \max_{(i,j,k) \in \indexset_0} a_k \big( |\hat{\psi}_{ijk,T}| - b_k \big) \le q_T(\alpha) \Big), \label{eq:FWER-hat}
\end{align}
we need to choose the quantity $q_T(\alpha)$ as the $(1-\alpha)$-quantile of the statistic 
\[ \hat{\Psi}_T = \max_{(i,j,k) \in \indexset} a_k \big( |\hat{\psi}_{ijk,T}^0| - b_k \big) \]
in order to ensure control of the FWER at level $\alpha$. Comparing \eqref{eq:FWER-hat} with \eqref{eq:FWER-tilde}, the current approach can be seen to differ from the traditional one in the following respect: the maximum statistic $\tilde{\Psi}_T$ is replaced by the rescaled version $\hat{\Psi}_T$ which re-weights the individual statistics $\hat{\psi}_{ijk,T}^0$ by the scale-dependent constants $a_k$ and $b_k$. As demonstrated above, this translates into scale-dependent critical values $c_{ijk,T}(\alpha) = c_T(\alpha,h_k)$.

Our theory allows us to work with both the traditional choice $c_{ijk,T}(\alpha) = c_T(\alpha)$ and the more modern, scale-dependent choice $c_{ijk,T}(\alpha) = c_T(\alpha,h_k)$. Since the latter choice produces a test approach with better theoretical properties in general \citep[cp.][]{DuembgenSpokoiny2001}, we restrict attention to the critical values $c_T(\alpha,h_k)$ in the sequel. There is, however, one complication we need to deal with: As the quantiles $q_T(\alpha)$ are not known in practice, we cannot compute the critical values $c_T(\alpha,h_k)$ exactly in practice but need to approximate them. This can be achieved as follows: Under appropriate regularity conditions, it can be shown that 
\begin{align*}
\hat{\psi}_{ijk,T}^0 
 & = \frac{\sum\nolimits_{t=1}^T \ind(\frac{t}{T} \in \mathcal{I}_k) \, \sigma \overline{\lambda}_{ij}^{1/2}(\frac{t}{T}) (\eta_{it} - \eta_{jt})}{ \hat{\sigma} \{ \sum\nolimits_{t=1}^T \ind(\frac{t}{T} \in \mathcal{I}_k) (\X_{it} + \X_{jt}) \}^{1/2}} \\
 & \approx \frac{1}{\sqrt{2Th_k}} \sum\limits_{t=1}^T \ind\Big(\frac{t}{T} \in \mathcal{I}_k\Big) \{ \eta_{it} - \eta_{jt} \}.
\end{align*} 
A Gaussian version of the statistic displayed in the final line above is given by 
\begin{equation*}
\phi_{ijk,T} = \frac{1}{\sqrt{2Th_k}} \sum\limits_{t=1}^T \ind\Big(\frac{t}{T} \in \mathcal{I}_k\Big) \big\{ Z_{it} - Z_{jt} \big\},
\end{equation*}
where $Z_{it}$ are independent standard normal random variables for $1 \le t \le T$ and $1 \le i \le n$. Hence, the statistic 
\[ \Phi_T = \max_{(i,j,k) \in \indexset} a_k \big( |\phi_{ijk,T}| - b_k \big) \]
can be regarded as a Gaussian version of the statistic $\hat{\Psi}_T$. We approximate the unknown quantile $q_T(\alpha)$ by the $(1-\alpha)$-quantile $q_{T,\text{Gauss}}(\alpha)$ of $\Phi_T$, which can be computed (approximately) by Monte Carlo simulations and can thus be treated as known.

To summarize, we propose the following procedure to simultaneously test the hypothesis $H_0^{(ijk)}$ for all $(i,j,k) \in \indexset$ at the significance level $\alpha \in (0,1)$: 
\begin{equation}\label{eq:test}
\text{For each } (i,j,k) \in \indexset, \text{ reject } H_0^{(ijk)} \text{ if } |\hat{\psi}_{ijk,T}| > c_{T,\text{Gauss}}(\alpha,h_k),
\end{equation}
where $c_{T,\text{Gauss}}(\alpha,h_k) = b_k + q_{T,\text{Gauss}}(\alpha)/a_k$ with $a_k = \{\log(e/h_k)\}^{1/2} / \log \log(e^e / h_k)$ and $b_k = \sqrt{2 \log(1/h_k)}$.

\subsection{Formal properties of the test}\label{subsec:test:properties}

In Theorem \ref{theo1} of the Appendix, we prove that under appropriate regularity conditions, the test defined in \eqref{eq:test} (asymptotically) controls the familywise error rate $\text{FWER}(\alpha)$ for each pre-specified significance level $\alpha$. As shown in Corollary \ref{corollary1}, this has the following implication: 
\begin{align}
\pr\Big( \forall (i,j,k) \in \indexset: \text{ If } |\hat{\psi}_{ijk,T}| > c_{T,\textnormal{Gauss}}(\alpha,h_k), \text{ then } & (i,j,k) \notin \indexset_0 \Big) \nonumber \\ & \ge 1 - \alpha + o(1), \label{eq:simconfstat}
\end{align} 
where $\indexset_0$ is the set of triples $(i,j,k) \in \indexset$ for which $H_0^{(ijk)}$ holds true. Verbally, \eqref{eq:simconfstat} can be expressed as follows:  
\begin{equation}\label{eq:confidencestatement1}
\begin{minipage}[c][1cm][c]{13cm}
\textit{With (asymptotic) probability at least $1-\alpha$, the null hypothesis $H_0^{(ijk)}$ is violated for all $(i,j,k) \in \indexset$ for which the test rejects $H_0^{(ijk)}$.} 
\end{minipage}
\end{equation}
In other words: 
\begin{equation}\label{eq:confidencestatement2}
\begin{minipage}[c][1cm][c]{13cm}
\textit{With (asymptotic) probability at least $1-\alpha$, the functions $\lambda_i$ and $\lambda_j$ differ on the interval $\mathcal{I}_k$ for all $(i,j,k) \in \indexset$ for which the test rejects $H_0^{(ijk)}$.} 
\end{minipage}
\end{equation}
Hence, the test allows us to make simultaneous confidence statements (a) about which pairs of countries $(i,j)$ have different trend functions and (b) about where, that is, in which time intervals $\mathcal{I}_k$ the functions differ.

\subsection{Implementation of the test in practice}

For a given significance level $\alpha \in (0,1)$, the test procedure defined in \eqref{eq:test} is implemented as follows in practice: 
\begin{enumerate}[label=\textit{Step \arabic*.}, leftmargin=1.45cm]
\item Compute the quantile $q_{T,\text{Gauss}}(\alpha)$ by Monte Carlo simulations. Specifically, draw a large number $N$ (say $N=5000$) samples of independent standard normal random variables $\{Z_{it}^{(\ell)} : 1 \le t \le T, \, 1 \le i \le n \}$ for $1 \le \ell \le N$. Compute the value $\Phi_T^{(\ell)}$ of the Gaussian statistic $\Phi_T$ for each sample $\ell$ and calculate the empirical $(1-\alpha)$-quantile $\hat{q}_{T,\text{Gauss}}(\alpha)$ from the values $\{ \Phi_T^{(\ell)}: 1 \le \ell \le N \}$. Use $\hat{q}_{T,\text{Gauss}}(\alpha)$ as an approximation of the quantile $q_{T,\text{Gauss}}(\alpha)$. 
\item Compute the critical values $c_{T,\text{Gauss}}(\alpha,h_k)$ for $1 \le k \le K$ based on the approximation $\hat{q}_{T,\text{Gauss}}(\alpha)$.
\item Carry out the test for each $(i,j,k) \in \indexset$ and store the test results in the variable $r_{ijk,T} = \ind( |\hat{\psi}_{ijk,T}| > c_{T,\text{Gauss}}(\alpha,h_k))$ for each $(i,j,k) \in \indexset$, that is, let $r_{ijk,T} = 1$ if the hypothesis $H_0^{(ijk)}$ is rejected and $r_{ijk,T} = 0$ otherwise. 
\end{enumerate}

To graphically present the test results, we produce a plot for each pair of countries $(i,j) \in \pairs$ that shows the intervals $\mathcal{I}_k$ for which the test rejects the null $H_0^{(ijk)}$, that is, the intervals in the set $\intervals_{\text{reject}}(i,j) = \{ \mathcal{I}_k \in \intervals: r_{ijk,T} = 1 \}$. The plot is designed such that it graphically highlights the subset of intervals $\intervals_{\text{reject}}^{\text{min}}(i,j) = \{ \mathcal{I}_k \in \intervals_{\text{reject}}(i,j):$ there exists no $\mathcal{I}_{k^\prime} \in \intervals_{\text{reject}}(i,j)$ with $\mathcal{I}_{k^\prime} \subset \mathcal{I}_k \}$. The elements of $\intervals_{\text{reject}}^{\text{min}}(i,j)$ are called minimal intervals. By definition, there is no other interval $\mathcal{I}_{k^\prime}$ in $\intervals_{\text{reject}}(i,j)$ which is a proper subset of a minimal interval $\mathcal{I}_k$. Hence, the minimal intervals can be regarded as those intervals in $\intervals_{\text{reject}}(i,j)$ which are most informative about the precise location of the differences between the trends $\lambda_i$ and $\lambda_j$. In Section \ref{sec:empirics}, we use the graphical device just described to present the test results of our empirical application; cp.\ panels (d) in Figures \ref{fig:Germany:Italy}--\ref{fig:Germany:UK}.

According to \eqref{eq:simconfstat}, we can make the following simultaneous confidence statement about the intervals in $\intervals_{\text{reject}}(i,j)$ for $(i,j) \in \pairs$: 
\begin{equation}\label{eq:confidencestatement3}
\begin{minipage}[c][1.75cm][c]{13cm}
\textit{With (asymptotic) probability at least $1-\alpha$, it holds that for every pair of countries $(i,j) \in \pairs$, the functions $\lambda_i$ and $\lambda_j$ differ on each interval in $\intervals_{\text{reject}}(i,j)$.} 
\end{minipage}
\end{equation}
Hence, we can claim with statistical confidence at least $1-\alpha$ that the functions $\lambda_i$ and $\lambda_j$ differ on each time interval which is depicted in the plots of our graphical device. Since $\intervals_{\text{reject}}^{\text{min}}(i,j) \subseteq \intervals_{\text{reject}}(i,j)$ for any $(i,j) \in \pairs$, the confidence statement \eqref{eq:confidencestatement3} trivially remains to hold true when the sets $\intervals_{\text{reject}}(i,j)$ are replaced by $\intervals_{\text{reject}}^{\text{min}}(i,j)$.

\section{Empirical application to COVID-19 data}\label{sec:empirics}

We now use our test to analyze the outbreak patterns of the COVID-19 epidemic. We proceed in two steps. In Section \ref{subsec:sim}, we assess the finite sample performance of our test by Monte-Carlo experiments. Specifically, we run a series of experiments which show that the test controls the FWER at level $\alpha$ as predicted by the theory and that it has good power properties. In Section \ref{subsec:app}, we then apply the test to a sample of COVID-19 data from different European countries. Our multiscale test is implemented in the R package \verb|multiscale|, available on GitHub at \texttt{https://github.com/marina-khi/multiscale}.

\subsection{Simulation experiments}\label{subsec:sim}

We simulate count data $\mathcal{\X} = \{ X_{it}: 1 \le i \le n, 1 \le t \le T \}$ by drawing the observations $X_{it}$ independently from a negative binomial distribution with mean $\lambda_i(t/T)$ and variance $\sigma^2 \lambda_i(t/T)$. By definition, $X_{it}$ has a negative binomial distribution with para\-meters $q$ and $r$ if $\pr(X_{it} = m) = \Gamma(m+r) / (\Gamma(r) m!) q^r (1-q)^m$ for each $m \in \naturals \cup \{0\}$. Since $\ex[X_{it}] = r (1-q)/q$ and $\var(X_{it}) = r(1-q)/q^2$, we can use the parametrization $q = 1/\sigma^2$ and $r = \lambda_i(t/T) / (\sigma^2 - 1)$ to obtain that $\ex[X_{it}] = \lambda_i(t/T)$ and $\var(X_{it}) = \sigma^2 \lambda_i(t/T)$. With this parametrization, the simulated data follow a nonparametric regression model of the form 
\[ \X_{it} = \lambda_i\Big(\frac{t}{T}\Big) + \sigma \sqrt{\lambda_i\Big(\frac{t}{T}\Big)} \eta_{it}, \]
where the noise variables $\eta_{it}$ have zero mean and unit variance. The functions $\lambda_i$ are specified below. The overdispersion parameter is set to $\sigma = 15$, which is similar to the estimate $\hat{\sigma} = 14.44$ obtained in the empirical application of Section \ref{subsec:app}. Robustness checks with $\sigma=10$ and $\sigma=20$ are provided in the Supplementary Material.

We consider different values for $T$ and $n$, in particular, $T \in \{100,250, 500\}$ and $n \in \{5,10,50\}$. Note that in the application, we have $T=139$ and $n=5$. We let $\pairs = \{ (i,j): 1 \le i < j \le n\}$, that is, we compare all pairs of countries $(i,j)$ with $i < j$. Moreover, we choose $\intervals$ to be a family of time intervals $\mathcal{I}_k$ with length $h_k \in \{ 7/T,14/T,21/T,28/T \}$. Hence, the intervals in $\intervals$ have length either $7$, $14$, $21$ or $28$ days (i.e., $1$, $2$, $3$ or $4$ weeks). For each length $h_k$, we include all intervals that start at days $t = 1 + 7(j-1)$ and $t = 4 + 7(j-1)$ for $j=1,2,\ldots$ A graphical presentation of the family $\intervals$ for $T = 139$ (as in the application) is given in Figure \ref{fig:intervals}. All our simulation experiments are based on $R=5000$ simulation runs.

\begin{figure}[t!]
\centering
\begin{subfigure}[b]{0.475\textwidth}
\includegraphics[width=\textwidth]{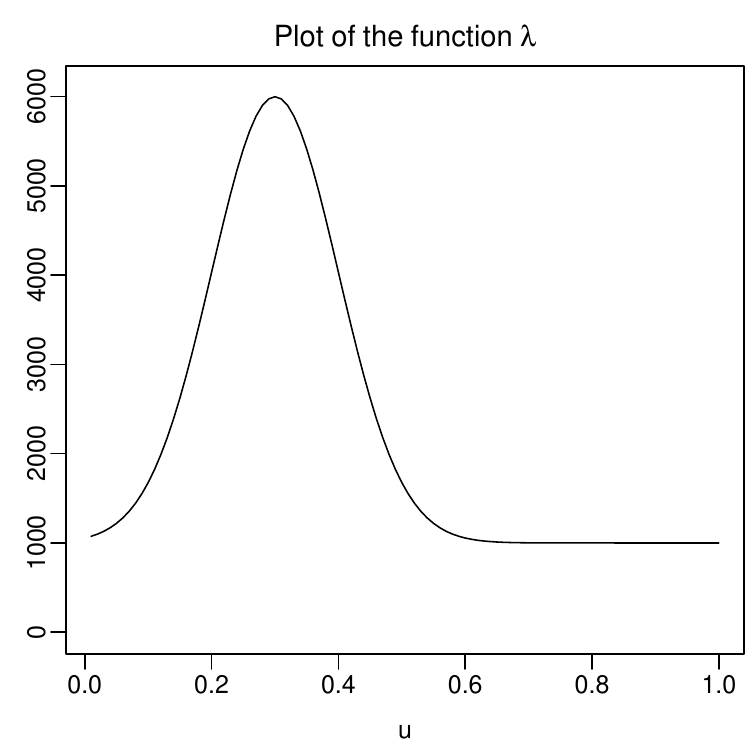}
\caption{}\label{fig:lambda}
\end{subfigure}\hspace{0.25cm}
\begin{subfigure}[b]{0.475\textwidth}
\includegraphics[width=\textwidth]{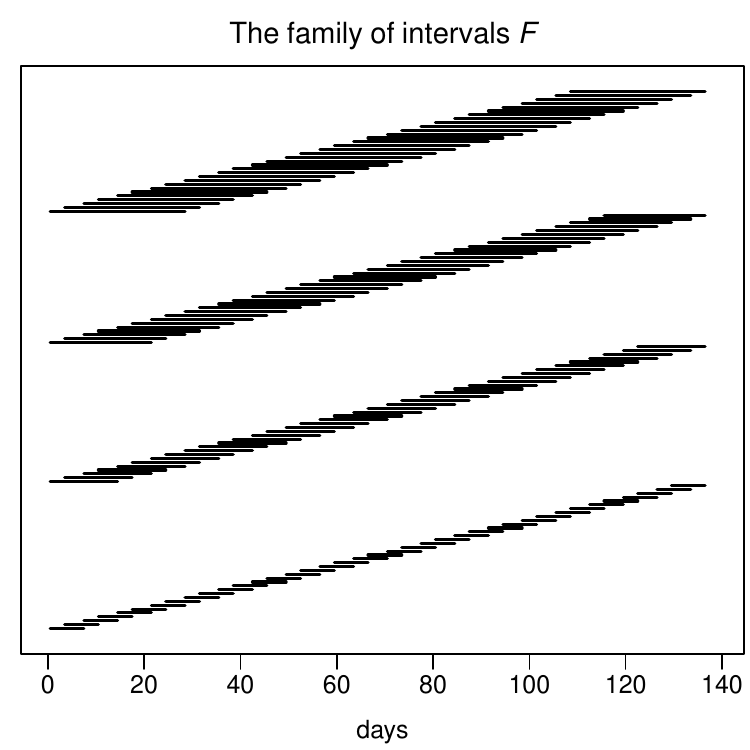}
\caption{}\label{fig:intervals}
\end{subfigure}
\caption{(a) Plot of the function $\lambda$; (b) plot of the family of intervals $\intervals$.}\label{fig:lambda_F}
\end{figure}

In the first part of the simulation study, we examine whether our test controls the FWER as predicted by the theory. To do so, we assume that the hypothesis $H_0^{(ijk)}$ holds true for all $(i,j,k)$ under consideration, which implies that $\lambda_i = \lambda$ for all $i$. We consider the function 
\begin{equation}\label{eq:lambda}
\lambda(u) = 5000 \exp\Big(-\frac{(10 u-3)^2}{2}\Big) + 1000, 
\end{equation}
which is similar in shape to some of the estimated trend curves in the application of Section \ref{subsec:app}. A plot of the function $\lambda$ is provided in Figure \ref{fig:lambda}. To evaluate whether the test controls the FWER at level $\alpha$, we compare the empirical size of the test with the target $\alpha$. The empirical size is computed as the precentage of simulation runs in which the test falsely rejects at least one null hypothesis $H_0^{(ijk)}$.

The simulation results are reported in Table \ref{tab:sim:size}. As can be seen, the empirical size gives a reasonable approximation to the target $\alpha$ in all scenarios under investigation, even though the size numbers have a slight downward bias. This bias gets larger as the number of time series $n$ increases, which reflects the fact that the test problem becomes more difficult for larger $n$. Already for $n=5$, the number $p$ of hypotheses to be tested is quite high, in particular, $p = 960, 2\,680, 5\,560$ for $T=100,250,500$. This number increases to $p=117\,600, 328\,300, 681\,100$ when $n=50$. Hence, the dimensionality and thus the complexity of the test problem increases considerably as $n$ gets larger. On first sight, it may seem astonishing that the downward bias does not diminish notably as the time series length $T$ increases. This, however, has a simple explanation: The interval lengths $h_k$ remain the same ($7$, $14$, $21$ or $28$ days) as $T$ increases, which implies that the effective sample size for computing the test statistics $\hat{\psi}_{ijk,T}$ does not change as well. To summarize, even though slightly conservative, the test controls the FWER quite accurately in the simulation setting at hand.

\begin{table}[t!]
\footnotesize{
\caption{Empirical size of the test for different values of $n$ and $T$.}\label{tab:sim:size}
\newcolumntype{C}[1]{>{\hsize=#1\hsize\centering\arraybackslash}X}
\newcolumntype{Z}{>{\centering\arraybackslash}X}
\begin{tabularx}{\textwidth}{l Z@{\hskip 6pt}Z@{\hskip 6pt}Z Z@{\hskip 6pt}Z@{\hskip 6pt}Z Z@{\hskip 6pt}Z@{\hskip 6pt}Z} 
\toprule
 & \multicolumn{3}{c}{$n = 5$} & \multicolumn{3}{c}{$n = 10$} & \multicolumn{3}{c}{$n = 50$} \\
\cmidrule[0.4pt]{2-4} \cmidrule[0.4pt]{5-7} \cmidrule[0.4pt]{8-10}
 & \multicolumn{3}{c}{significance level $\alpha$} &\multicolumn{3}{c}{significance level $\alpha$} & \multicolumn{3}{c}{significance level $\alpha$} \\
 & 0.01 & 0.05 & 0.1  &  0.01 & 0.05 & 0.1  &  0.01 & 0.05 & 0.1 \\
\cmidrule[0.4pt]{1-10}
$T = 100$ & 0.011 & 0.047 & 0.093 & 0.010 & 0.044 & 0.087 & 0.008 & 0.037 & 0.075 \\ 
  $T = 250$ & 0.009 & 0.047 & 0.091 & 0.009 & 0.046 & 0.087 & 0.008 & 0.035 & 0.069 \\ 
  $T = 500$ & 0.010 & 0.044 & 0.083 & 0.008 & 0.048 & 0.093 & 0.007 & 0.035 & 0.077 \\ 
\bottomrule
\end{tabularx}}
\end{table}

In the second part of the simulation study, we investigate the power properties of the test. To do so, we assume that $\lambda_i = \lambda$ for all $i > 1$ and that $\lambda_1 \neq \lambda$, where $\lambda$ is defined in \eqref{eq:lambda}. Hence, only the first mean function $\lambda_1$ is different from the others. This implies that the hypothesis $H_0^{(ijk)}$ holds true for all $(i, j, k)$ with $i > 1$ and $j > 1$, while there is at least one hypothesis $H_0^{(ijk)}$ with either $i = 1$ or $j = 1$ that does not hold true. We consider two different simulation scenarios. In Scenario A, the function $\lambda_1$ has the form 
\[ \lambda_1(u) = 6000 \exp\Big(-\frac{(10 u-3)^2}{2}\Big) + 1000 \]
and is plotted together with $\lambda$ in Figure \ref{fig:lambda_fcts_height}. As can be seen, the two functions $\lambda_1$ and $\lambda$ peak at the same point in time, but the peak of $\lambda_1$ is higher than that of $\lambda$. In Scenario B, we let
\[ \lambda_1(u) = 5000 \exp\Big(-\frac{(9 u-3)^2}{2}\Big) + 1000. \]
Figure \ref{fig:lambda_fcts_shift} shows that the peaks of $\lambda_1$ and $\lambda$ have the same height but are reached at different points in time. To evaluate the power properties of the test in Scenarios A and B, we compute the percentage of simulation runs where the test (i) correctly detects differences between $\lambda_1$ and at least one of the other mean functions and (ii) does not spuriously detect differences between the other mean functions. Put differently, we calculate the percentage of simulation runs where (i) the set $\intervals_{\text{reject}}(1, j)$ is non-empty at least for one $j \in \{2, \ldots, n\}$ and (ii) all other sets $\intervals_{\text{reject}}(i, j)$ with $2 \leq i < j \leq n$ are empty. We call this percentage number the (empirical) power of the test. We thus use the term ``power'' a bit differently than usual.

\begin{figure}[t!]
\centering
\begin{subfigure}[b]{0.475\textwidth}
\includegraphics[width=\textwidth]{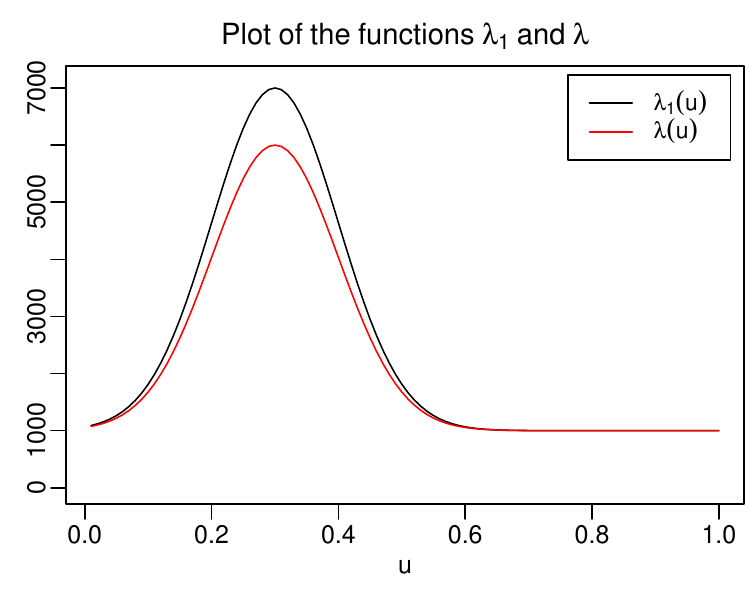}
\caption{Scenario A}\label{fig:lambda_fcts_height}
\end{subfigure}\hspace{0.25cm}
\begin{subfigure}[b]{0.475\textwidth}
\includegraphics[width=\textwidth]{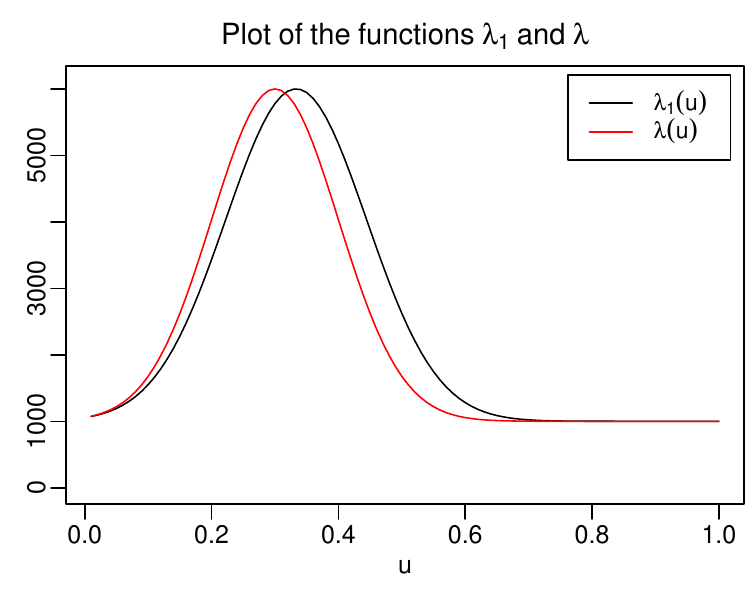}
\caption{Scenario B}\label{fig:lambda_fcts_shift}
\end{subfigure}
\caption{Plot of the functions $\lambda_1$ (black) and $\lambda$ (red) in the simulation scenarios A and B.}
\end{figure}

\begin{table}[t!]
\footnotesize{
\caption{Power of the test for different values of $n$ and $T$ in Scenario A.}\label{tab:sim:power:1}
\newcolumntype{C}[1]{>{\hsize=#1\hsize\centering\arraybackslash}X}
\newcolumntype{Z}{>{\centering\arraybackslash}X}
\begin{tabularx}{\textwidth}{l Z@{\hskip 6pt}Z@{\hskip 6pt}Z Z@{\hskip 6pt}Z@{\hskip 6pt}Z Z@{\hskip 6pt}Z@{\hskip 6pt}Z} 
\toprule
 & \multicolumn{3}{c}{$n = 5$} & \multicolumn{3}{c}{$n = 10$} & \multicolumn{3}{c}{$n = 50$} \\
\cmidrule[0.4pt]{2-4} \cmidrule[0.4pt]{5-7} \cmidrule[0.4pt]{8-10}
 & \multicolumn{3}{c}{significance level $\alpha$} &\multicolumn{3}{c}{significance level $\alpha$} & \multicolumn{3}{c}{significance level $\alpha$} \\
 & 0.01 & 0.05 & 0.1  &  0.01 & 0.05 & 0.1  &  0.01 & 0.05 & 0.1 \\
\cmidrule[0.4pt]{1-10}
$T = 100$ & 0.335 & 0.518 & 0.597 & 0.306 & 0.474 & 0.545 & 0.212 & 0.352 & 0.418 \\ 
  $T = 250$ & 0.615 & 0.790 & 0.836 & 0.580 & 0.764 & 0.800 & 0.470 & 0.648 & 0.705 \\ 
  $T = 500$ & 0.736 & 0.905 & 0.917 & 0.738 & 0.884 & 0.890 & 0.636 & 0.799 & 0.830 \\ 
\bottomrule
\end{tabularx}}
\end{table}

\begin{table}[t!]
\footnotesize{
\caption{Power of the test for different values of $n$ and $T$ in Scenario B.}\label{tab:sim:power:2}
\newcolumntype{C}[1]{>{\hsize=#1\hsize\centering\arraybackslash}X}
\newcolumntype{Z}{>{\centering\arraybackslash}X}
\begin{tabularx}{\textwidth}{l Z@{\hskip 6pt}Z@{\hskip 6pt}Z Z@{\hskip 6pt}Z@{\hskip 6pt}Z Z@{\hskip 6pt}Z@{\hskip 6pt}Z} 
\toprule
 & \multicolumn{3}{c}{$n = 5$} & \multicolumn{3}{c}{$n = 10$} & \multicolumn{3}{c}{$n = 50$} \\
\cmidrule[0.4pt]{2-4} \cmidrule[0.4pt]{5-7} \cmidrule[0.4pt]{8-10}
 & \multicolumn{3}{c}{significance level $\alpha$} &\multicolumn{3}{c}{significance level $\alpha$} & \multicolumn{3}{c}{significance level $\alpha$} \\
 & 0.01 & 0.05 & 0.1  &  0.01 & 0.05 & 0.1  &  0.01 & 0.05 & 0.1 \\
\cmidrule[0.4pt]{1-10}
$T = 100$ & 0.824 & 0.910 & 0.903 & 0.812 & 0.893 & 0.890 & 0.738 & 0.847 & 0.857 \\ 
  $T = 250$ & 0.991 & 0.972 & 0.941 & 0.991 & 0.960 & 0.920 & 0.991 & 0.965 & 0.933 \\ 
  $T = 500$ & 0.997 & 0.973 & 0.949 & 0.995 & 0.961 & 0.923 & 0.996 & 0.969 & 0.932 \\ 
\bottomrule
\end{tabularx}}
\end{table}

The results for Scenario A (see Figure \ref{fig:lambda_fcts_height}) are presented in Table \ref{tab:sim:power:1} and those for Scenario B (see Figure \ref{fig:lambda_fcts_shift}) in Table \ref{tab:sim:power:2}. As can be seen, the test has substantial power in all the considered simulation settings. It is more powerful in Scenario B than in Scenario A, which is most presumably due to the fact that the differences $|\lambda_1(u) - \lambda(u)|$ are much larger in Scenario B. Moreover, it is less powerful for larger numbers of time series $n$, which reflects the fact that the test problem gets more high-dimensional and thus more difficult as $n$ increases. As one would expect, the power numbers tend to become larger as the time series length $T$ and the significance level $\alpha$ increase. In Scenario B (mostly for $T=250$ and $T=500$), however, the power numbers drop down a bit as $\alpha$ gets larger. This reverse dependance can be explained by the way we calculate power: we exclude simulation runs where the test spuriously detects differences between the trends in countries $i$ and $j$ with $i, j > 1$. The number of spurious findings increases as we make the significance level $\alpha$ larger, which presumably causes the slight drop in power.

\subsection{Analysis of COVID-19 data}\label{subsec:app}

The COVID-19 pandemic is one of the most pressing issues at present. The first outbreak occurred in Wuhan, China, in December 2019. On 30 January 2020, the World Health Organization (WHO) declared that the outbreak constitutes a Public Health Emergency of International Concern, and on 11 March 2020, the WHO characterized it as a pandemic. As of 22 July 2020, more than $14.56$ million cases of COVID-19 infections have been reported worldwide, resulting in more than $607\,000$ deaths.

There are many open questions surrounding the current COVID-19 pandemic. A question which is particularly relevant for governments and policy makers is whether the pandemic has developed similarly in different countries or whether there are notable differences. Identifying these differences may give some insight into which government policies have been more effective in containing the virus than others. In what follows, we use our multiscale test to compare the development of COVID-19 in several European countries. It is important to emphasize that our test allows to identify differences in the development of the epidemic across countries in a statistically rigorous way, but it does not tell what causes these differences. By distinguishing statistically significant differences from artefacts of the sampling noise, the test provides the basis for a further investigation into the causes. Such an investigation, however, presumably goes beyond a mere statistical analysis.

\subsubsection{Data}

We analyze data from five European countries: Germany, Italy, Spain, France and the United Kingdom. For each country $i$, we observe a time series $\mathcal{X}_i = \{ X_{it}: 1 \le t \le T \}$, where $X_{it}$ is the number of newly confirmed COVID-19 cases in country $i$ on day $t$. The data are freely available on the homepage of the European Center for Disease Prevention and Control (\texttt{https://www.ecdc.europa.eu}) and were downloaded on 22 July 2020. As already mentioned in the Introduction, we take the day of the $100$th confirmed case in each country as the starting date $t=1$, which is a common way of ``normalizing'' the data and making them comparable across countries \citep[cp.][]{Cohen2020}. The time series length $T$ is taken to be the minimal number of days for which we have observations for all five countries. The resulting dataset consists of $n = 5$ time series, each with $T = 139$ observations (as of July 22). Some of the time series contain negative values which we replaced by $0$. Overall, this resulted in $6$ replacements. Plots of the observed time series are presented in the upper panels (a) of Figures \ref{fig:Germany:Italy}--\ref{fig:Germany:UK}.

To interpret the results produced by our multiscale test, we consider the Government Response Index (GRI) from the Oxford COVID-19 Government Response Tracker (OxCGRT) \citep{Hale2020}. The GRI measures how severe the actions are that are taken by a country's government to contain the virus. It is calculated based on several common government policies such as school closures and travel restrictions. The GRI ranges from $0$ to $100$, with $0$ corresponding to no response from the government at all and $100$ corresponding to full lockdown, closure of schools and workplaces, ban on travelling, etc. Detailed information on the collection of the data for government responses and the methodology for calculating the GRI is provided in \cite{Hale2020paper}. Plots of the GRI time series are given in panels (c) of Figures \ref{fig:Germany:Italy}--\ref{fig:Germany:UK}.

\subsubsection{Test results}

We assume that the data $X_{it}$ of each country $i$ in our sample follow the nonparametric trend model 
\[ \X_{it} = \lambda_i\Big(\frac{t}{T}\Big) + \sigma \sqrt{\lambda_i\Big(\frac{t}{T}\Big)} \eta_{it}, \]
which was introduced in equation \eqref{eq:model}. The overdispersion parameter $\sigma$ is estimated by the procedure described in Section \ref{subsec:test:test}, which yields the estimate $\hat{\sigma} = 14.44$. Throughout the section, we set the significance level to $\alpha=0.05$ and implement the multiscale test in exactly the same way as in the simulation study of Section \ref{subsec:sim}. In particular, we let $\pairs = \{ (i,j): 1 \le i < j \le 5\}$, that is, we compare all pairs of countries $(i,j)$ with $i < j$, and we choose $\intervals$ to be the family of time intervals plotted in Figure \ref{fig:intervals}. Hence, all intervals in $\intervals$ have length either 7, 14, 21 or 28 days.

With the help of our multiscale method, we simultaneously test the null hypo\-thesis $H_0^{(ijk)}$ that $\lambda_i = \lambda_j$ on the interval $\mathcal{I}_k$ for each $(i, j, k) \in \indexset$. The results are presented in Figures \ref{fig:Germany:Italy}--\ref{fig:Germany:UK}, each figure comparing a specific pair of countries $(i,j)$ from our sample. For the sake of brevity, we only show the results for the pairwise comparisons of Germany with each of the four other countries. The remaining figures can be found in Section \ref{s:subsec:app} of the Supplementary Material. Each figure splits into four panels (a)--(d).  Panel (a) shows the observed time series for the two countries $i$ and $j$ that are compared. Panel (b) presents smoothed versions of the time series from (a), that is, it shows nonparametric kernel estimates (specifically, Nadaraya-Watson estimates) of the two trend functions $\lambda_i$ and $\lambda_j$, where the bandwidth is set to $7$ days and a rectangular kernel is used. Panel (c) displays the Government Response Index (GRI) of the two countries. Finally, panel (d) presents the results produced by our test: it depicts in grey the set $\intervals_{\text{reject}}(i,j)$ of all the intervals $\mathcal{I}_k$ for which the test rejects the null $H_0^{(ijk)}$. The minimal intervals in the subset $\intervals_{\text{reject}}^{\text{min}}(i, j)$ are highlighted by a black frame. Note that according to \eqref{eq:simconfstat}, we can make the following simultaneous confidence statement about the intervals plotted in panels (d) of Figures \ref{fig:Germany:Italy}--\ref{fig:Germany:UK}: we can claim, with confidence of about $95\%$, that there is a difference between the functions $\lambda_i$ and $\lambda_j$ on each of these intervals.

\begin{figure}[p!]
\begin{minipage}[t]{0.49\textwidth}
\includegraphics[width=\textwidth]{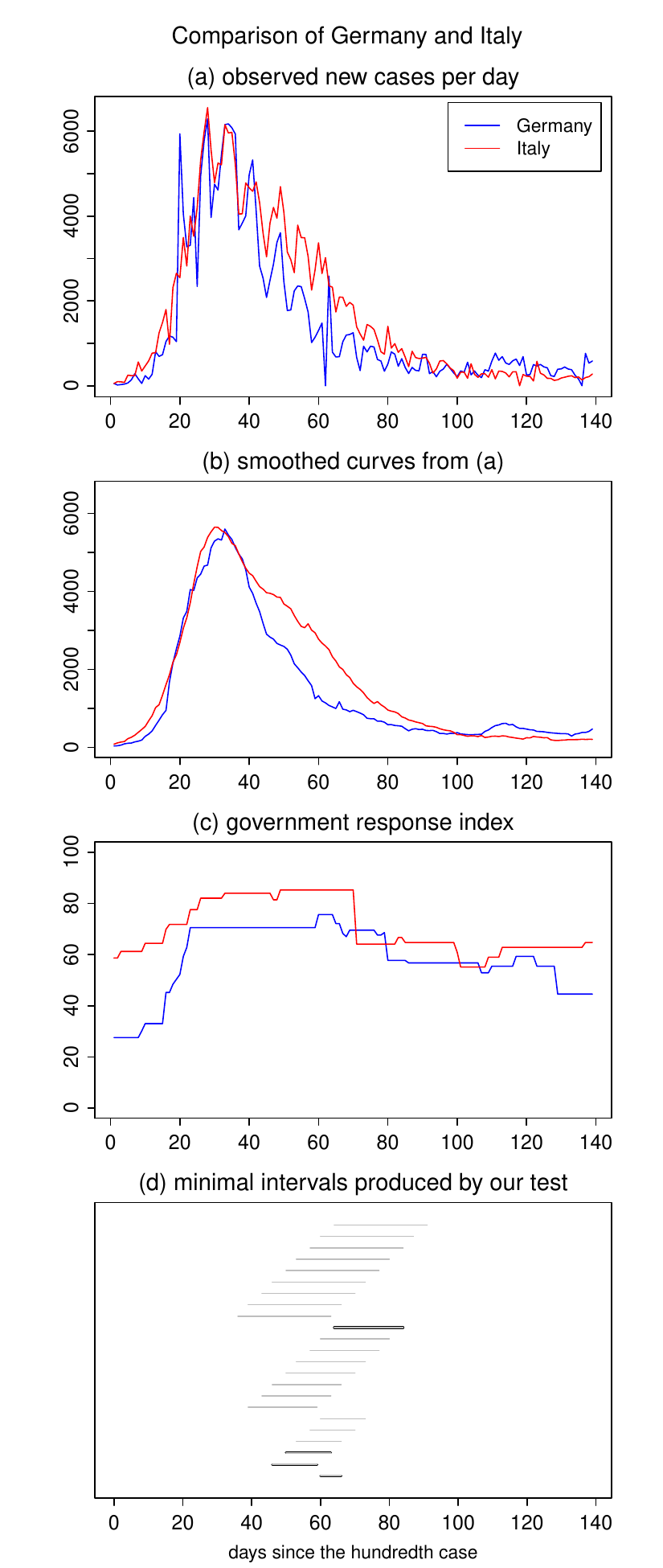}
\caption{Test results for the comparison of Germany and Italy.}\label{fig:Germany:Italy}
\end{minipage}
\hspace{0.25cm}
\begin{minipage}[t]{0.49\textwidth}
\includegraphics[width=\textwidth]{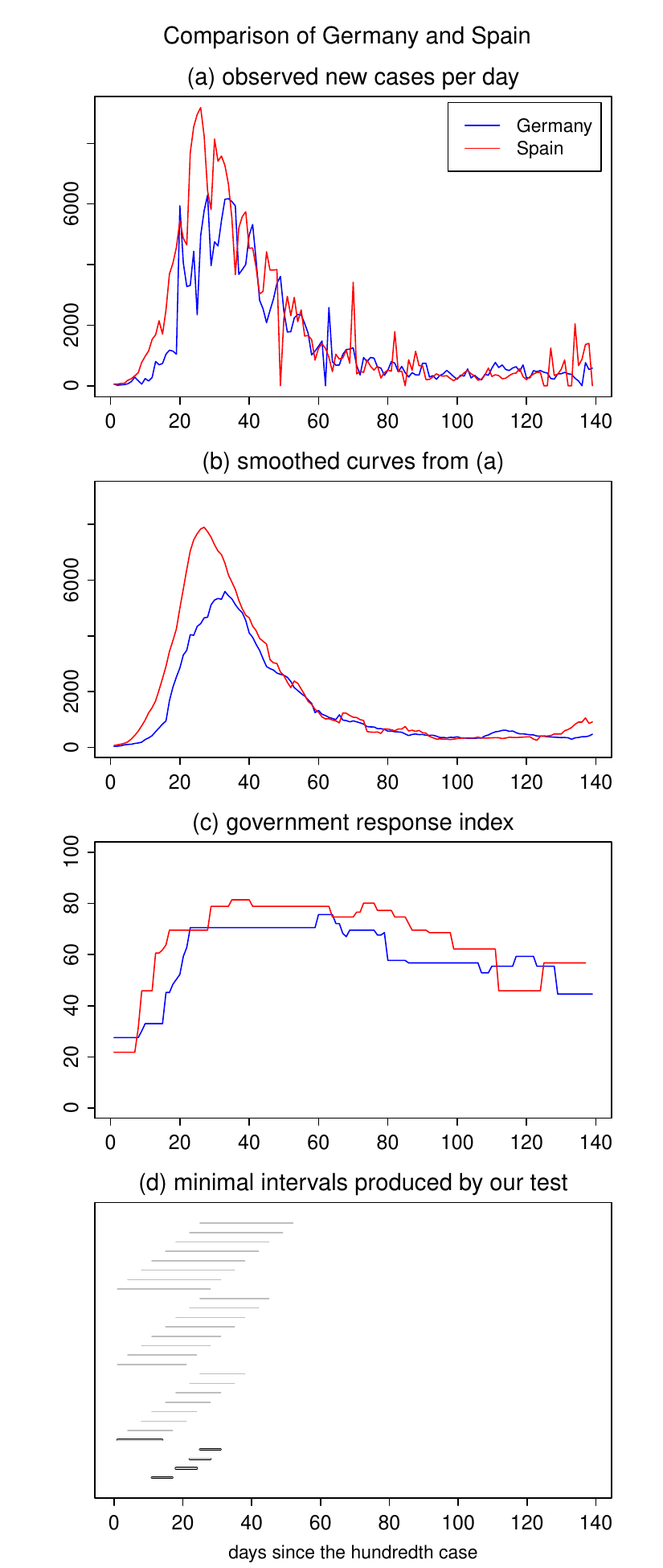}
\caption{Test results for the comparison of Germany and Spain.}\label{fig:Germany:Spain}
\end{minipage}

\caption*{Note: In each figure, panel (a) shows the two observed time series, panel (b) smoothed versions of the time series, and panel (c) the corresponding Government Response Index (GRI). Panel (d) depicts the set of intervals $\intervals_{\text{reject}}(i,j)$ in grey and the subset of minimal intervals $\intervals_{\text{reject}}^{\text{min}}(i,j)$ with a black frame.}
\end{figure}

\begin{figure}[p!]
\begin{minipage}[t]{0.49\textwidth}
\includegraphics[width=\textwidth]{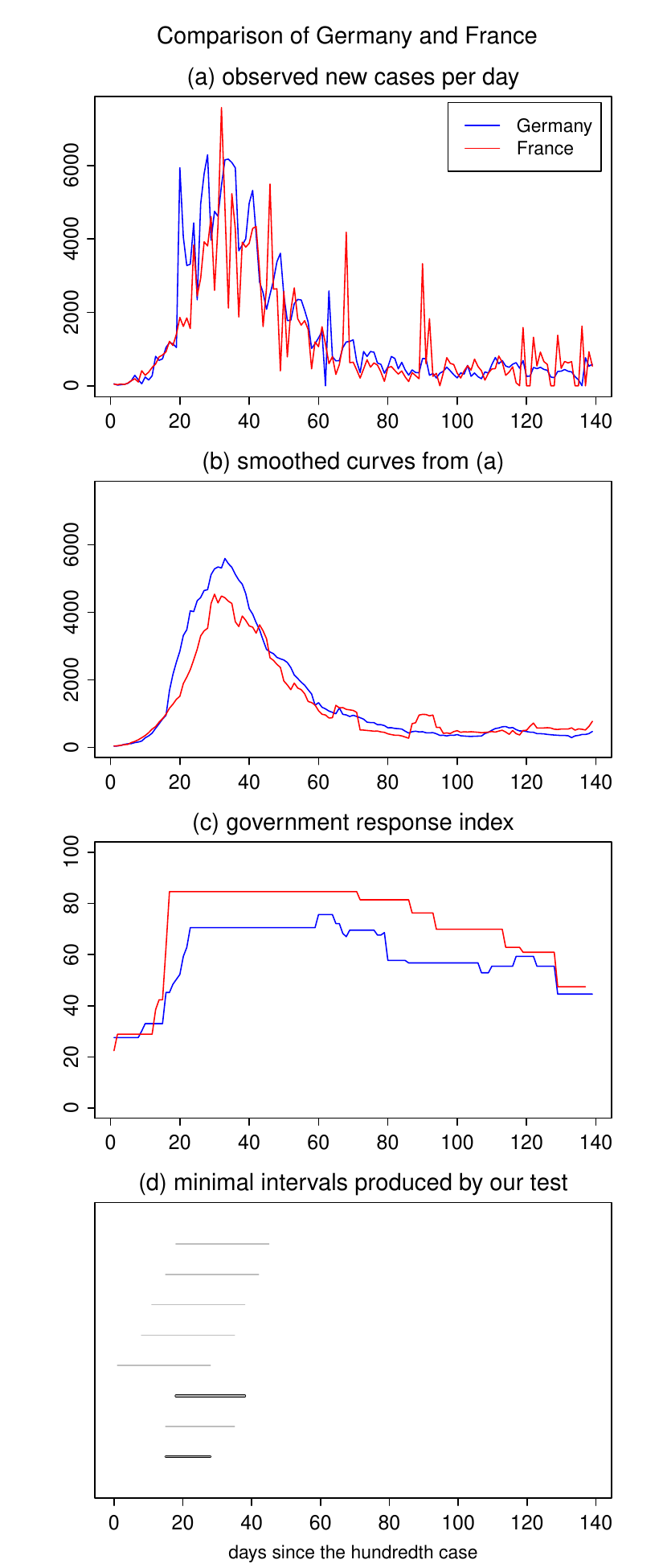}
\caption{Test results for the comparison of Germany and France.}\label{fig:Germany:France}
\end{minipage}
\hspace{0.25cm}
\begin{minipage}[t]{0.49\textwidth}
\includegraphics[width=\textwidth]{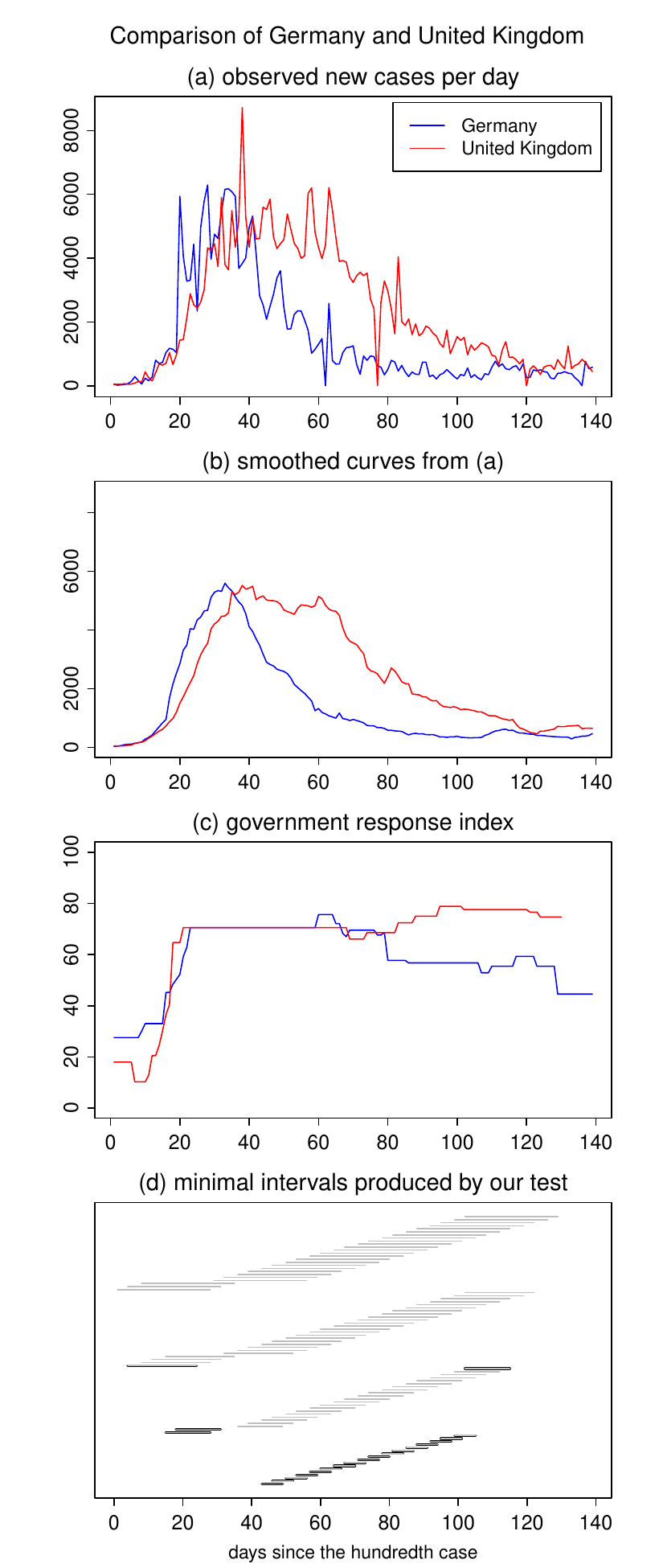}
\caption{Test results for the comparison of Germany and the UK.}\label{fig:Germany:UK}
\end{minipage}

\caption*{Note: In each figure, panel (a) shows the two observed time series, panel (b) smoothed versions of the time series, and panel (c) the corresponding Government Response Index (GRI). Panel (d) depicts the set of intervals $\intervals_{\text{reject}}(i,j)$ in grey and the subset of minimal intervals $\intervals_{\text{reject}}^{\text{min}}(i,j)$ with a black frame.}
\end{figure}

We now have a closer look at the results in Figures \ref{fig:Germany:Italy}--\ref{fig:Germany:UK}. Figure \ref{fig:Germany:Italy} presents the comparison of Germany with Italy. The two time series of daily new cases in panel (a) can be seen to be very similar until approximately day $40$. Thereafter, the German time series appears to trend downwards more strongly than the Italian one. The smoothed data in panel (b) give a similar visual impression: the kernel estimates of the German and Italian trend curves $\lambda_i$ and $\lambda_j$ are very close to each other until approximately day $40$ but then start to differ. It is however not clear whether the differences between the two curve estimates reflect differences between the underlying trend curves or whether these are mere artefacts of sampling noise. Our test allows to clarify this issue. Inspecting panel (d), we see that the test detects significant differences between the trend curves in the time period between day $36$ and $91$. However, it does not find any significant differences up to day $36$. 
Taken together, our results provide evidence that the epidemic developed very similarly in Germany and Italy until a peak was reached around day $40$. Thereafter, however, the German time series exhibits a significantly stronger downward trend than the Italian one.

Inspecting Figures \ref{fig:Germany:Spain} and  \ref{fig:Germany:France}, a quite different picture arises when comparing Germany with France and Spain. The test detects significant differences between the German trend and the trends in France and Spain up to (approximately) day $50$ but not thereafter. Hence, we find that the time trends evolve differently during the outbreak of the crisis, but they appear to decrease in more or less the same fashion after a peak was reached. Finally, the comparison of Germany with the UK in Figure \ref{fig:Germany:UK} reveals significant differences between the time trends over essentially the whole observation window. Inspecting the time series in panel (a), it is quite obvious that the UK trend evolves differently from the German one after day $40$. However, our test also detects differences between the trends during the onset of the crisis, which is not obvious from the time series plot in panel (a).

\subsubsection{Discussion}

Having identified significant differences between the epidemic trends in the five countries under consideration, one may ask next what are the causes of these differences. As already mentioned at the beginning of this section, this question cannot be answered by our test. Rather, a further analysis which presumably goes beyond pure statistics is needed to shed some light on it. We here do not attempt to provide any answers. We merely discuss some observations which become apparent upon considering our test results in the light of the Government Response Index (GRI). For reasons of brevity, we focus on the comparison of Germany with Italy and Spain in Figures \ref{fig:Germany:Italy} and \ref{fig:Germany:Spain}.

According to our test results in Figure \ref{fig:Germany:Spain}, there are significant differences between the trends in Germany and Spain during the onset of the epidemic up to about day $50$, with Spain having more new cases of infections than Germany on most days. After day $50$, the trends become quite similar and start to decrease at approximately the same rate. This may be due to the fact that Spain in general introduced more severe measures of lockdown than Germany (as can be seen upon inspecting the GRI in panel (c) of Figure \ref{fig:Germany:Spain}), which may have helped to battle the spread of infection. However, a much more thorough analysis is of course needed to find out whether this is indeed the case or whether other factors were mainly responsible.

Turning to the comparison of Germany and Italy, we found that the German trend drops down significantly faster than the Italian one after approximately day $40$. Interestingly, the GRI of Italy almost always lies above that of Germany. Hence, even though Italy has in general taken more severe and restrictive measures against the virus than Germany, it appears that the virus could be contained better in Germany (in the sense that the trend of daily new cases went down significantly faster in Germany than in Italy). This suggests that there are indeed important factors besides the level of government response to the pandemic which substantially influence the trend of new COVID-19 cases.

This brief discussion already indicates that it is extremely difficult to determine the exact causes of the differences in epidemic trends across countries. Since even similar countries such as those in our sample differ in a variety of aspects that are relevant for the spread of the virus, it is very challenging to pin down these causes. One issue that is often discussed in the context of cross-country comparisons are country-specific strategies to test for the coronavirus. The argument is that differences between epidemic trends may be spuriously produced by country-specific test procedures.

Even though we can of course not fully exclude this possibility, our test results are presumably not driven by different test regimes in the countries under consideration. To see this, we consider again the comparison of Germany and Italy: The test regimes in these two countries are arguably quite different. Germany is often cited as the country that employed early, widespread testing with more than $100\,000$ tests per week even in the beginning of the pandemic \citep{Cohen2020}, while testing in Italy became widespread only in the late stages of the pandemic. 
Nevertheless, visual inspection of the raw and smoothed data in panels (a) and (b) of Figure \ref{fig:Germany:Italy} suggest that the underlying time trends are very similar up to day $36$. This is confirmed by our multiscale test which does not find any significant differences before that day. Hence, the different test regimes in Germany and Italy towards the beginning of the pandemic do not appear to have an overly strong effect and to produce spurious differences between the time trends. This suggests that the differences detected by our multiscale test indeed reflect differences in the way the virus spread in Germany and Italy rather than being mere artefacts of different test regimes.

\newpage
\section*{A \hspace{0.2cm} Appendix}
\def\theequation{A.\arabic{equation}}
\setcounter{equation}{0}

\enlargethispage{0.1cm}

In what follows, we state and prove the main theoretical results on the multiscale test developed in Section \ref{sec:test}. Throughout the Appendix, we let $C$ be a generic positive constant that may take a different value on each occurrence. Unless stated differently, $C$ depends neither on the time series length $T$ nor on the dimension $p$ of the test problem. We further use the symbols $h_{\min}:= \min_{1 \le k \le K} h_k$ and $h_{\max} := \max_{1 \le k \le K} h_k$ to denote the smallest and largest interval length in the family $\intervals$.

\begin{theoremA}\label{theo1}
Let \ref{C2} and \ref{C1} be satisfied. Moreover, assume that (i) $h_{\max} = o(1/\log T)$, (ii) $h_{\min} \ge CT^{-b}$ for some $b \in (0,1)$, and (iii) $p = O(T^{(\theta/2)(1-b)-(1+\delta)})$ for some small $\delta > 0$. Then for any given $\alpha \in (0,1)$,
\[ \textnormal{FWER}(\alpha) := \pr \Big( \exists (i,j,k) \in \indexset_0: |\hat{\psi}_{ijk,T}| > c_{T,\textnormal{Gauss}}(\alpha,h_k) \Big) \le \alpha + o(1), \]
where $\indexset_0 \subseteq \indexset$ is the set of all $(i,j,k) \in \indexset$ for which $H_0^{(ijk)}$ holds true. 
\end{theoremA}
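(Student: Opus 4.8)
The plan is to reduce the familywise error rate to a single tail probability of the pure-noise maximum $\hat{\Psi}_T$ and then to show that this statistic is asymptotically governed by its Gaussian counterpart $\Phi_T$, whose $(1-\alpha)$-quantile $q_{T,\textnormal{Gauss}}(\alpha)$ defines the critical values. I would begin by exploiting that on the null set $\indexset_0$ the test statistic coincides with the auxiliary statistic, $\hat{\psi}_{ijk,T} = \hat{\psi}_{ijk,T}^0$, as noted after \eqref{eq:stat0}. Since $c_{T,\textnormal{Gauss}}(\alpha,h_k) = b_k + q_{T,\textnormal{Gauss}}(\alpha)/a_k$, the event $|\hat{\psi}_{ijk,T}| > c_{T,\textnormal{Gauss}}(\alpha,h_k)$ equals $a_k(|\hat{\psi}_{ijk,T}^0| - b_k) > q_{T,\textnormal{Gauss}}(\alpha)$ on $\indexset_0$, so that
\[ \textnormal{FWER}(\alpha) = \pr\Big( \max_{(i,j,k) \in \indexset_0} a_k\big(|\hat{\psi}_{ijk,T}^0| - b_k\big) > q_{T,\textnormal{Gauss}}(\alpha) \Big) \le \pr\big( \hat{\Psi}_T > q_{T,\textnormal{Gauss}}(\alpha) \big), \]
using $\indexset_0 \subseteq \indexset$ and monotonicity of the maximum. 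It therefore suffices to prove $\pr(\hat{\Psi}_T > q_{T,\textnormal{Gauss}}(\alpha)) \le \alpha + o(1)$, and the dependence on which hypotheses are actually true has now disappeared.

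Next I would approximate the weighted maximum $\hat{\Psi}_T$ by its Gaussian analogue $\Phi_T$ through two coupling steps, both carried out uniformly over the $p$ triples in $\indexset$ and, crucially, surviving multiplication by the scale weights. The first step replaces $\hat{\psi}_{ijk,T}^0$ by the simpler normalized noise sum $\tilde{\phi}_{ijk,T} := (2Th_k)^{-1/2} \sum_{t=1}^T \ind(t/T \in \mathcal{I}_k)(\eta_{it} - \eta_{jt})$. The associated error is controlled by combining the consistency $\hat{\sigma}^2 = \sigma^2 + o_p(1)$ from Lemma \ref{lemmaS1}, a law-of-large-numbers replacement of $\sum_t \ind(t/T\in\mathcal{I}_k)(\X_{it}+\X_{jt})$ by its mean $2\sum_t \ind(t/T\in\mathcal{I}_k)\overline{\lambda}_{ij}(t/T)$, and the Lipschitz bound \ref{C2}, which permits treating $\overline{\lambda}_{ij}$ as constant on the short interval $\mathcal{I}_k$ up to an error of order $h_k$. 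Since the deviation $a_k(|\hat{\psi}_{ijk,T}^0| - b_k) - a_k(|\tilde{\phi}_{ijk,T}| - b_k)$ inherits a factor $a_k b_k = O(\log(1/h_k))$ from this last replacement, condition (i) $h_{\max} = o(1/\log T)$ is exactly what guarantees that the maximal local-constancy error is negligible. The second step couples $\tilde{\phi}_{ijk,T}$ to $\phi_{ijk,T}$ by invoking a strong Gaussian approximation for the partial sums of the independent array $(\eta_{it})$ under \ref{C1}, thereby producing the standard normal variables $Z_{it}$ appearing in $\phi_{ijk,T}$.

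The main obstacle is this second, strong-approximation step, since it must hold uniformly over an index set whose cardinality $p$ is permitted to grow polynomially faster than $T$. For a single interval of length $h_k \ge CT^{-b}$, a Sakhanenko-type approximation yields a partial-sum coupling error whose normalized, $a_k$-weighted version is of order $\sqrt{\log T}\,(Th_k)^{1/\theta - 1/2}$, a quantity that tends to zero only because $\theta > 4$. Bounding the $\theta$-th moment of this single-triple error and taking a union bound over the $p$ triples in $\indexset$ then shows that $\max_{(i,j,k)\in\indexset}$ of the error is $o_p(1)$ precisely under the bandwidth restriction (ii) and the dimensionality restriction (iii); the exponent $(\theta/2)(1-b)-(1+\delta)$ in (iii) is exactly the threshold at which the union-bound tail, fed by the moment bound from \ref{C1}, still vanishes.

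Finally, writing $\rho_T = o_p(1)$ for the resulting uniform bound on $|\hat{\Psi}_T - \Phi_T|$, I would close the argument with an anti-concentration inequality for the Gaussian maximum $\Phi_T$ of Chernozhukov--Chetverikov--Kato type, which (using that each $\phi_{ijk,T}$ is approximately standard normal) guarantees $\sup_x \pr(|\Phi_T - x| \le \rho_T) = o(1)$. As $q_{T,\textnormal{Gauss}}(\alpha)$ is the $(1-\alpha)$-quantile of $\Phi_T$, this yields
\[ \pr\big(\hat{\Psi}_T > q_{T,\textnormal{Gauss}}(\alpha)\big) \le \pr\big(\Phi_T > q_{T,\textnormal{Gauss}}(\alpha) - \rho_T\big) + o(1) \le \alpha + \sup_x \pr\big(x-\rho_T < \Phi_T \le x\big) + o(1) = \alpha + o(1), \]
completing the proof.
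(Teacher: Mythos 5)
Your architecture is right in two of its three steps, and these match the paper: the reduction of $\textnormal{FWER}(\alpha)$ to $\pr\big(\hat{\Psi}_T > q_{T,\textnormal{Gauss}}(\alpha)\big)$ via the identity $\hat{\psi}_{ijk,T} = \hat{\psi}_{ijk,T}^0$ on $\indexset_0$ is the paper's Step 4; your first replacement step (removing $\hat{\sigma}$, the random denominator, and the non-constancy of $\overline{\lambda}_{ij}$, with the error carrying the $a_k$ weights) is the paper's Step 1; and the closing anti-concentration argument via Nazarov's inequality is the paper's Step 3. The genuine gap is in your second, strong-approximation step, and it is conceptual, not technical. The statistic $\Phi_T$ is a maximum over all $(i,j,k) \in \indexset$ of functionals of a \emph{single} Gaussian array $\{Z_{it}\}$, and its quantile $q_{T,\textnormal{Gauss}}(\alpha)$ encodes the joint dependence across triples induced by overlapping intervals and shared countries. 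A Sakhanenko-type coupling achieving your claimed local rate $\sqrt{\log T}\,(Th_k)^{1/\theta-1/2}$ is a coupling of the roughly $Th_k$ observations inside that one interval; constructing such a coupling separately for each triple (which is what your per-triple moment bound plus union bound over the $p$ triples presupposes) produces $p$ mutually incompatible Gaussian arrays, and the maximum of the resulting coupled Gaussian statistics does \emph{not} have the law of $\Phi_T$ --- so proximity to it says nothing about $q_{T,\textnormal{Gauss}}(\alpha)$. Conversely, if you keep one coherent array by coupling each country's full partial-sum process $S_{it} = \sum_{s \le t} \eta_{is}$ once, the standard Sakhanenko/KMT guarantees control only the global error $\max_{t \le T}|S_{it} - G_{it}| = O_p(T^{1/\theta})$; increments over subintervals inherit this global error, and nothing in the standard statements localizes it to $(Th_k)^{1/\theta}$ simultaneously over all overlapping, multiscale intervals. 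Your rate claim, and hence the union-bound computation built on it, is therefore unavailable as stated.

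The paper avoids couplings altogether: its Step 2 applies the high-dimensional central limit theorem of Chernozhukov, Chetverikov and Kato (Proposition \ref{prop:Chernozhukov}), which directly bounds $\sup_{q}\big|\pr(\Psi_T \le q) - \pr(\Phi_T \le q)\big|$ for maxima of sums of independent $p$-dimensional vectors with matched covariances --- a purely distributional comparison, which is exactly what tolerates $p \gg T$ without any pathwise construction. If you want to keep a coupling-based argument, it can in fact be repaired, but along different lines than you propose: use one coupling per country (so the array $\{Z_{it}\}$ is coherent), take the union bound over the countries in $\countries$ rather than over the $p$ triples, note $|\countries| \le 2p$, and bound every triple's error by the global rate, giving $\max_{i \in \countries}\max_{t \le T}|S_{it}-G_{it}| = O_p\big((pT)^{1/\theta}\big)$. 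Condition (iii) then yields $(pT)^{1/\theta} = O\big(T^{(1-b)/2 - \delta/\theta}\big)$, which beats $\sqrt{Th_{\min}} \ge C^{1/2}T^{(1-b)/2}$ with enough slack to absorb the $\sqrt{\log T}$ weight and the $\sqrt{\log(2p)}$ factor from Nazarov's inequality. That computation closes the argument, but it is not the one in your proposal, whose per-interval rates and per-triple union bound cannot both be had with a single Gaussian array.
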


\noindent According to Theorem \ref{theo1}, the multiscale test asymptotically controls the FWER at level $\alpha$ under conditions \ref{C2}--\ref{C1} and the restrictions (i)--(iii) on $h_{\min}$, $h_{\max}$ and $p$. 
Restriction (i) allows the maximal interval length $h_{\max}$ to converge to zero very slowly, which means that $h_{\max}$ can be picked very large in practice. According to restriction (ii), the minimal interval length $h_{\min}$ can be chosen to go to zero as any polynomial $T^{-b}$ with some $b \in (0,1)$. Restriction (iii) allows the dimension $p$ of the test problem to grow polynomially in $T$. Specifically, $p$ may grow at most as the polynomial $T^{\gamma}$ with $\gamma = (\theta/2)(1-b)-(1+\delta)$. As one can see, the exponent $\gamma$ depends on the number of error moments $\theta$ defined in \ref{C1} and the parameter $b$ that specifies the minimal interval length $h_{\min}$. In particular, for any given $b \in (0,1)$, the exponent $\gamma$ gets larger as $\theta$ increases. Hence, the larger the number of error moments $\theta$, the faster $p$ may grow in comparison to $T$. In the extreme case where all error moments exist, that is, where $\theta$ can be made as large as desired, $p$ may grow as any polynomial of $T$, no matter how we pick $b \in (0,1)$. Thus, if the error terms have sufficiently many moments, the dimension $p$ can be extremely large in comparison to $T$ and the minimal interval length $h_{\min}$ can be chosen very small. 

The following corollary is an immediate consequence of Theorem \ref{theo1}. It provides the theoretical justification needed to make simultaneous confidence statements of the form \eqref{eq:confidencestatement1}--\eqref{eq:confidencestatement3}.

\begin{corollaryA}\label{corollary1}
Under the conditions of Theorem \ref{theo1}, 
\[ \pr\Big( \forall (i,j,k) \in \indexset: \text{ If } |\hat{\psi}_{ijk,T}| > c_{T,\textnormal{Gauss}}(\alpha,h_k), \text{ then } (i,j,k) \notin \indexset_0 \Big) \ge 1 - \alpha + o(1) \]
for any given $\alpha \in (0,1)$.   
\end{corollaryA}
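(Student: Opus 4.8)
The plan is to derive the corollary as a purely logical restatement of the familywise error bound already established in Theorem \ref{theo1}, so that all the analytic content is inherited from that result. First I would isolate the event whose probability the corollary controls and pass to its complement. For a fixed triple $(i,j,k) \in \indexset$, the implication ``if $|\hat{\psi}_{ijk,T}| > c_{T,\textnormal{Gauss}}(\alpha,h_k)$, then $(i,j,k) \notin \indexset_0$'' fails exactly when $|\hat{\psi}_{ijk,T}| > c_{T,\textnormal{Gauss}}(\alpha,h_k)$ holds while at the same time $(i,j,k) \in \indexset_0$. Consequently, the negation of the universally quantified statement over all $(i,j,k) \in \indexset$ is the existence of at least one triple for which both of these conditions hold simultaneously.

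Next I would note that the side condition $(i,j,k) \in \indexset_0$ merely restricts the range of the existential quantifier to the subindex set $\indexset_0$. Hence the complement of the event appearing in the corollary is precisely
\[ \big\{ \exists (i,j,k) \in \indexset_0: |\hat{\psi}_{ijk,T}| > c_{T,\textnormal{Gauss}}(\alpha,h_k) \big\}, \]
which is verbatim the event whose probability is denoted $\textnormal{FWER}(\alpha)$ in the statement of Theorem \ref{theo1}.

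The conclusion then follows by taking complementary probabilities and invoking the theorem: the probability in the corollary equals $1 - \textnormal{FWER}(\alpha)$, and since $\textnormal{FWER}(\alpha) \le \alpha + o(1)$ by Theorem \ref{theo1}, this is at least $1 - (\alpha + o(1)) = 1 - \alpha + o(1)$, with the $o(1)$ term carried over unchanged from the theorem.

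I do not anticipate any genuine obstacle, as no new estimation or probabilistic argument is required. The single point demanding care is the correct logical negation of the conditional ``if $\ldots$ then $\ldots$'' nested inside the universal quantifier, together with the verification that absorbing the side condition $(i,j,k) \in \indexset_0$ into the quantifier range reproduces the FWER event \emph{exactly} rather than some strictly larger event. Once this identification is secured, the bound is immediate.
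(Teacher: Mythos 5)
Your proposal is correct and is essentially identical to the paper's own proof: both arguments identify the complement of the event in the corollary with the FWER event of Theorem \ref{theo1} (absorbing the condition $(i,j,k) \in \indexset_0$ into the quantifier range) and then conclude by taking complementary probabilities. No further comment is needed.
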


\begin{proof}[\textnormal{\textbf{Proof of Theorem \ref{theo1}.}}] The proof proceeds in several steps. 
\begin{enumerate}[label=\textit{Step \arabic*.}, leftmargin=0cm, itemindent=1.45cm]

\item Let $\hat{\Psi}_T = \max_{(i,j,k) \in \indexset} a_k (|\hat{\psi}_{ijk,T}^0| - b_k)$ with $\hat{\psi}_{ijk,T}^0$ introduced in \eqref{eq:stat0} and define $\Psi_T = \max_{(i,j,k) \in \indexset} a_k (|\psi_{ijk,T}^0| - b_k)$ with 
\[ \psi_{ijk,T}^0 = \frac{1}{\sqrt{2Th_k}} \sum\limits_{t=1}^T \ind\Big(\frac{t}{T} \in \mathcal{I}_k\Big) (\eta_{it} - \eta_{jt}). \]
To start with, we prove that  
\begin{equation}\label{eq:approxerror1}
\big| \hat{\Psi}_T - \Psi_T \big| = o_p(r_T),
\end{equation}
where $\{r_T\}$ is any null sequence that converges more slowly to zero than $\rho_T = \sqrt{\log T} \{ \log p/\sqrt{Th_{\min}} + h_{\max} \sqrt{\log p} \}$, that is, $\rho_T/r_T \rightarrow 0$ as $T \rightarrow \infty$. Since the proof of \eqref{eq:approxerror1} is rather technical and lengthy, the details are provided in the Supplementary Material.

\item We next prove that 
\begin{equation}\label{eq:kolmogorov-distance}
\sup_{q \in \reals} \Big| \pr \big( \Psi_T \le q \big) - \pr \big( \Phi_T \le q \big) \Big| = o(1).
\end{equation}
To do so, we rewrite the statistics $\Psi_T$ and $\Phi_T$ as follows: Define 
\begin{equation*}
V^{(ijk)}_t = V^{(ijk)}_{t,T} := \sqrt{\frac{T}{2Th_k}} \ind\Big(\frac{t}{T} \in \mathcal{I}_k\Big) (\eta_{it} - \eta_{jt})
\end{equation*}
for $(i,j,k) \in \indexset$ and let $\boldsymbol{V}_t = (V_t^{(ijk)}: (i,j,k) \in \indexset)$ be the $p$-dimensional random vector with the entries $V_t^{(ijk)}$. With this notation, we get that $\psi_{ijk,T}^0 = T^{-1/2} \sum_{t=1}^T V^{(ijk)}_t$ and thus 
\begin{align*}
\Psi_T 
 & = \max_{(i,j,k) \in \indexset}  a_k \big( |\psi_{ijk,T}^0| - b_k \big) \\
 & = \max_{(i,j,k) \in \indexset} a_k \Big\{ \Big|\frac{1}{\sqrt{T}} \sum_{t=1}^T V^{(ijk)}_t\Big| - b_k \Big\}.
\end{align*} 
Analogously, we define 
\begin{equation*}
W^{(ijk)}_t = W^{(ijk)}_{t,T} := \sqrt{\frac{T}{2Th_k}} \ind\Big(\frac{t}{T} \in \mathcal{I}_k\Big) (Z_{it} - Z_{jt})
\end{equation*}
with $Z_{it}$ i.i.d.\ standard normal and let $\boldsymbol{W}_t = (W_t^{(ijk)}: (i,j,k) \in \indexset)$. The vector $\boldsymbol{W}_t$ is a Gaussian version of $\boldsymbol{V}_t$ with the same mean and variance. In particular, $\ex[\boldsymbol{W}_t] = \ex[\boldsymbol{V}_t] = 0$ and $\ex[\boldsymbol{W}_t \boldsymbol{W}_t^\top] = \ex[\boldsymbol{V}_t \boldsymbol{V}_t^\top]$. Similarly as before, we can write $\phi_{ijk,T} = T^{-1/2} \sum_{t=1}^T W^{(ijk)}_t$ and  
\begin{align*}
\Phi_T 
 & = \max_{(i,j,k) \in \indexset} a_k \big( |\phi_{ijk,T}| - b_k \big) \\
 & = \max_{(i,j,k) \in \indexset} a_k \Big\{ \Big|\frac{1}{\sqrt{T}} \sum_{t=1}^T W^{(ijk)}_t\Big| - b_k \Big\}.
\end{align*} 
For any $q \in \reals$, it holds that
\begin{align*}
\pr \big( \Psi_T \le q \big) 
 & = \pr \Big( \max_{(i,j,k) \in \indexset} a_k \Big\{ \Big|\frac{1}{\sqrt{T}} \sum_{t=1}^T V^{(ijk)}_t\Big| - b_k \Big\} \le q \Big) \\
 & = \pr \Big( \Big|\frac{1}{\sqrt{T}} \sum_{t=1}^T V^{(ijk)}_t\Big| \le c_{ijk}(q) \text{ for all } (i,j,k) \in \indexset \Big) \\
 & = \pr \Big( \Big|\frac{1}{\sqrt{T}} \sum_{t=1}^T \boldsymbol{V}_t\Big| \le \boldsymbol{c}(q) \Big),
\end{align*} 
where $\boldsymbol{c}(q) = (c_{ijk}(q): (i,j,k) \in \indexset)$ is the $\reals^p$-vector with the entries $c_{ijk}(q) = q/a_k + b_k$, we use the notation $|v| = (|v_1|,\ldots,|v_p|)^\top$ for vectors $v \in \reals^p$ and the inequality $v \le w$ is to be understood componentwise for $v,w \in \reals^p$. Analogously, we have  
\[ \pr \big( \Phi_T \le q \big) = \pr \Big( \Big|\frac{1}{\sqrt{T}} \sum_{t=1}^T \boldsymbol{W}_t\Big| \le \boldsymbol{c}(q) \Big). \]
With this notation at hand, we can make use of Proposition 2.1 from \cite{Chernozhukov2017}. In our context, this proposition can be stated as follows: 
\begin{propA}\label{prop:Chernozhukov}
Assume that 
\begin{enumerate}[label=(\alph*),leftmargin=0.7cm]
\item $T^{-1} \sum_{t=1}^T \ex (V^{(ijk)}_t)^2 \ge \delta > 0$ for all $(i,j,k) \in \indexset$.
\item $T^{-1} \sum_{t=1}^T \ex[ |V^{(ijk)}_t|^{2+r} ] \le B_T^r$ for all $(i,j,k) \in \indexset$ and $r=1,2$, where $B_T \ge 1$ are constants that may tend to infinity as $T \rightarrow \infty$.  
\item $\ex[ \{ \max_{(i,j,k) \in \indexset} |V^{(ijk)}_t| / B_T \}^\theta ] \le 2$ for all $t$ and some $\theta > 4$.  
\end{enumerate}
Then  
\begin{align}
\sup_{\boldsymbol{c} \in \reals^p} \Big| \pr \Big( \Big|\frac{1}{\sqrt{T}} \sum_{t=1}^T \boldsymbol{V}_t\Big| \le \boldsymbol{c} \Big) & - \pr \Big( \Big|\frac{1}{\sqrt{T}} \sum_{t=1}^T \boldsymbol{W}_t\Big| \le \boldsymbol{c} \Big) \Big| \nonumber \\ & \le C \Big\{ \Big( \frac{B_T^2 \log^7(pT)}{T} \Big)^{1/6} + \Big( \frac{B_T^2 \log^3(pT)}{T^{1-2/\theta}} \Big)^{1/3} \Big\}, \label{eq:Chernozhukov}
\end{align}
where $C$ depends only on $\delta$ and $\theta$. 
\end{propA}
It is straightforward to verify that assumptions (a)--(c) are satisfied under the conditions of Theorem \ref{theo1} for sufficiently large $T$, where $B_T$ can be chosen as $B_T = C p^{1/\theta} h_{\min}^{-1/2}$ with $C$ sufficiently large. Moreover, it can be shown that the right-hand side of \eqref{eq:Chernozhukov} is $o(1)$ for this choice of $B_T$. Hence, Proposition \ref{prop:Chernozhukov} yields that 
\[ \sup_{\boldsymbol{c} \in \reals^p} \Big| \pr \Big( \Big|\frac{1}{\sqrt{T}} \sum_{t=1}^T \boldsymbol{V}_t\Big| \le \boldsymbol{c} \Big) - \pr \Big( \Big|\frac{1}{\sqrt{T}} \sum_{t=1}^T \boldsymbol{W}_t\Big| \le \boldsymbol{c} \Big) \Big| = o(1), \]
which in turn implies \eqref{eq:kolmogorov-distance}.

\item With the help of \eqref{eq:approxerror1} and \eqref{eq:kolmogorov-distance}, we now show that 
\begin{equation}\label{eq:kolmogorov-distance-hat}
\sup_{q \in \reals} \Big| \pr \big( \hat{\Psi}_T \le q \big) - \pr \big( \Phi_T \le q \big) \Big| = o(1).
\end{equation}
To start with, the above supremum can be bounded by 
\begin{align}
 & \sup_{q \in \reals} \Big| \pr \big( \hat{\Psi}_T \le q \big) - \pr \big( \Phi_T \le q \big) \Big| \nonumber \\
 & = \sup_{q \in \reals} \Big| \pr \Big( \Psi_T \le q + \big\{ \Psi_T - \hat{\Psi}_T \big\} \Big) - \pr \big( \Phi_T \le q \big) \Big| \nonumber \\
 & \le \sup_{q \in \reals} \max \Big\{ \Big| \pr \Big( \Psi_T \le q + \big| \Psi_T - \hat{\Psi}_T \big| \Big) - \pr \big( \Phi_T \le q \big) \Big|, \nonumber \\
 & \phantom{\le \sup_{q \in \reals} \max \Big\{ \ } \Big| \pr \Big( \Psi_T \le q - \big| \Psi_T - \hat{\Psi}_T \big| \Big) - \pr \big( \Phi_T \le q \big) \Big| \Big\} \nonumber \\
 & \le \sup_{q \in \reals} \max \Big\{ \Big| \pr \Big( \Psi_T \le q + r_T \Big) - \pr \big( \Phi_T \le q \big) \Big| + \pr \Big( \big| \Psi_T - \hat{\Psi}_T \big| > r_T \Big), \nonumber \\
 & \phantom{\le \sup_{q \in \reals} \max \Big\{ \ } \Big| \pr \Big( \Psi_T \le q - r_T \Big) - \pr \big( \Phi_T \le q \big) \Big| + \pr \Big( \big| \Psi_T - \hat{\Psi}_T \big| > r_T \Big) \Big\} \nonumber \\
 & \le \max_{\ell=0,1} \, \sup_{q \in \reals} \Big| \pr \Big( \Psi_T \le q + (-1)^\ell r_T \Big) - \pr \big( \Phi_T \le q \big) \Big| + \pr \Big( \big| \Psi_T - \hat{\Psi}_T \big| > r_T \Big) \nonumber \\
 & = \max_{\ell=0,1} \, \sup_{q \in \reals} \Big| \pr \Big( \Psi_T \le q + (-1)^\ell r_T \Big) - \pr \big( \Phi_T \le q \big) \Big| + o(1), \label{eq:step3:a}
\end{align}
where the last line is by \eqref{eq:approxerror1}. Moreover, for $\ell=0,1$, 
\begin{align}
 & \sup_{q \in \reals} \Big| \pr \Big( \Psi_T \le q + (-1)^\ell r_T \Big) - \pr \big( \Phi_T \le q \big) \Big| \nonumber \\
 & \le \sup_{q \in \reals} \Big| \pr \Big( \Psi_T \le q + (-1)^\ell r_T \Big) - \pr \Big( \Phi_T \le q + (-1)^\ell r_T \Big) \Big| \nonumber \\
 & \quad + \sup_{q \in \reals} \Big| \pr \Big( \Phi_T \le q + (-1)^\ell r_T \Big) - \pr \big( \Phi_T \le q \big) \Big| \nonumber \\
 & = \sup_{q \in \reals} \Big| \pr \Big( \Phi_T \le q + (-1)^\ell r_T \Big) - \pr \big( \Phi_T \le q \big) \Big| + o(1), \label{eq:step3:b}
\end{align}
the last line following from \eqref{eq:kolmogorov-distance}. Finally, by Nazarov's inequality (cp.\ \citeauthor{Nazarov2003}, \citeyear{Nazarov2003} and Lemma A.1 in \citeauthor{Chernozhukov2017}, \citeyear{Chernozhukov2017}), we have that for $\ell = 0,1$,   
\begin{align} 
 & \sup_{q \in \reals} \Big| \pr \Big( \Phi_T \le q + (-1)^\ell r_T \Big) - \pr \big( \Phi_T \le q \big) \Big| \nonumber \\
 & = \sup_{q \in \reals} \Big| \pr \Big( \Big|\frac{1}{\sqrt{T}} \sum_{t=1}^T \boldsymbol{W}_t\Big| \le \boldsymbol{c}(q + (-1)^\ell r_T) \Big) - \pr \Big( \Big|\frac{1}{\sqrt{T}} \sum_{t=1}^T \boldsymbol{W}_t\Big| \le \boldsymbol{c}(q) \Big) \Big| \nonumber \\
 & \le C r_T \sqrt{\log(2 p)}, \label{eq:step3:c}
\end{align}
where $C$ is a constant that depends only on the parameter $\delta$ defined in condition (a) of Proposition \ref{prop:Chernozhukov}. Inserting \eqref{eq:step3:b} and \eqref{eq:step3:c} into equation \eqref{eq:step3:a} completes the proof of \eqref{eq:kolmogorov-distance-hat}.

\item By definition of the quantile $q_{T,\text{Gauss}}(\alpha)$, it holds that $\pr(\Phi_T \le q_{T,\text{Gauss}}(\alpha)) \ge 1-\alpha$. As shown in the Supplementary Material, we even have that  
\begin{equation}\label{eq:quant-exact}
\pr(\Phi_T \le q_{T,\text{Gauss}}(\alpha)) = 1-\alpha
\end {equation} 
for any $\alpha \in (0,1)$. From this and \eqref{eq:kolmogorov-distance-hat}, it immediately follows that  
\begin{equation}\label{eq:probbound-Psihat}
\pr \big( \hat{\Psi}_T \le q_{T,\text{Gauss}}(\alpha) \big) = 1 - \alpha + o(1), 
\end{equation}
which in turn implies that 
\begin{align*}
\text{FWER}(\alpha)
 & = \pr \Big( \exists (i,j,k) \in \indexset_0: |\hat{\psi}_{ijk,T}| > c_{T,\text{Gauss}}(\alpha,h_k) \Big) \\
 & = \pr \Big( \max_{(i,j,k) \in \indexset_0} a_k \big( |\hat{\psi}_{ijk,T}| - b_k \big) > q_{T,\text{Gauss}}(\alpha) \Big) \\
 & = \pr \Big( \max_{(i,j,k) \in \indexset_0} a_k \big( |\hat{\psi}_{ijk,T}^0| - b_k \big) > q_{T,\text{Gauss}}(\alpha) \Big) \\
 & \le \pr \Big( \max_{(i,j,k) \in \indexset} a_k \big( |\hat{\psi}_{ijk,T}^0| - b_k \big) > q_{T,\text{Gauss}}(\alpha) \Big) \\
 & = \pr \big( \hat{\Psi}_T > q_{T,\text{Gauss}}(\alpha) \big) = \alpha + o(1).
\end{align*}
This completes the proof of Theorem \ref{theo1}. \qedhere

\end{enumerate}
\end{proof}

\begin{proof}[\textnormal{\textbf{Proof of Corollary \ref{corollary1}.}}]
By Theorem \ref{theo1}, 
\begin{align*}
1 - \alpha + o(1) 
 & \le 1 - \textnormal{FWER}(\alpha) \\
 & = \pr \Big( \nexists (i,j,k) \in \indexset_0: |\hat{\psi}_{ijk,T}| > c_{T,\textnormal{Gauss}}(\alpha,h_k) \Big) \\
 & = \pr\Big( \forall (i,j,k) \in \indexset: \text{ If } |\hat{\psi}_{ijk,T}| > c_{T,\textnormal{Gauss}}(\alpha,h_k), \text{ then } (i,j,k) \notin \indexset_0 \Big),
\end{align*}
which gives the statement of Corollary \ref{corollary1}.
\end{proof}

\section*{Acknowledgements}

The authors gratefully acknowledge financial support by the Deutsche Forschungsgemeinschaft (DFG, German Research Foundation), grant number 430668955.

\bibliographystyle{ims}
{\small
\setlength{\bibsep}{0.35em}
\bibliography{bibliography}}

\newpage
\def\thesection{\Alph{section}}
\setcounter{section}{18}
\section{Supplementary Material}


\def\theequation{S.\arabic{equation}}
\setcounter{equation}{0}
\def\thefigure{S.\arabic{figure}}
\setcounter{figure}{0}
\def\thetable{S.\arabic{table}}
\setcounter{table}{0}
\renewcommand{\baselinestretch}{1.2}\normalsize

\subsection{Technical details}

In what follows, we provide the technical details omitted in the Appendix. To start with, we prove the following auxiliary lemma.
\begin{lemmaS}\label{lemmaS1}
Under the conditions of Theorem \ref{theo1}, it holds that 
\[ \big| \hat{\sigma}^2 - \sigma^2 \big| = O_p\Big( \sqrt{\frac{\log p}{T}} \Big). \]
\end{lemmaS}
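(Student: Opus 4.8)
The plan is to exploit the ratio structure of the estimator. Writing $D_i := T^{-1} \sum_{t=1}^T \X_{it}$ for the denominator and $N_i := T^{-1} \sum_{t=2}^T (\X_{it} - \X_{it-1})^2$ for the numerator, so that $\hat{\sigma}_i^2 = N_i / (2 D_i)$, I would start from the decomposition
\[ \hat{\sigma}_i^2 - \sigma^2 = \frac{N_i - 2 \sigma^2 D_i}{2 D_i}, \]
and reduce the claim to controlling the numerator fluctuation $N_i - 2\sigma^2 D_i$ together with a lower bound on $D_i$. Since $\hat{\sigma}^2 = |\countries|^{-1} \sum_{i \in \countries} \hat{\sigma}_i^2$ is an \emph{average} over the $|\countries|$ countries, it will suffice to bound the average of $|N_i - 2\sigma^2 D_i|$, which — crucially — does not require a sharp maximal inequality across $i$.

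First I would establish a uniform lower bound on the denominator. Because $D_i - \bar{\lambda}_i = T^{-1} \sum_{t=1}^T \sigma \sqrt{\lambda_i(t/T)} \, \eta_{it}$, with $\bar{\lambda}_i := T^{-1}\sum_t \lambda_i(t/T) \ge \lambda_{\min}$, is an average of independent mean-zero terms with uniformly bounded $\theta$-th moments, a Rosenthal inequality gives $\ex | D_i - \bar{\lambda}_i |^\theta = O(T^{-\theta/2})$ uniformly in $i$. A union bound over $i \in \countries$ combined with Markov's inequality then yields $\max_{i \in \countries} |D_i - \bar{\lambda}_i| = o_p(1)$, provided $|\countries| T^{-\theta/2} \to 0$; this is guaranteed by condition (iii), since $|\countries| \le 2|\pairs| \le 2p = O(T^{(\theta/2)(1-b) - (1+\delta)})$ gives $|\countries| T^{-\theta/2} = O(T^{-(\theta/2) b - (1+\delta)}) \to 0$. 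Consequently the event $G_T = \{ \min_{i \in \countries} D_i \ge \lambda_{\min}/2 \}$ has probability tending to one, and on $G_T$ we have $|\hat{\sigma}_i^2 - \sigma^2| \le \lambda_{\min}^{-1} |N_i - 2\sigma^2 D_i|$.

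The core of the argument is the bound $\ex | N_i - 2\sigma^2 D_i | = O(T^{-1/2})$, uniform in $i$. Using Lipschitz continuity of $\lambda_i$ and the boundedness away from zero in \ref{C2}, I would write $\X_{it} - \X_{it-1} = \sigma \sqrt{\lambda_i(t/T)} (\eta_{it} - \eta_{it-1}) + r_{it}$ with $|r_{it}| \le C(1 + |\eta_{it-1}|)/T$, as in the heuristic preceding the lemma. A direct computation of expectations (using $\ex[(\eta_{it} - \eta_{it-1})^2] = 2$ and independence) gives $\ex[N_i] = 2\sigma^2 \bar{\lambda}_i + O(T^{-1})$ and hence $\ex[N_i - 2\sigma^2 D_i] = O(T^{-1})$. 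For the stochastic fluctuations I would bound the variances: $\var(D_i) = O(T^{-1})$ is immediate, while $\var(N_i) = T^{-2}\var(\sum_{t=2}^T (\X_{it} - \X_{it-1})^2) = O(T^{-1})$ follows from the fact that the summands $(\X_{it} - \X_{it-1})^2$ form a $1$-dependent sequence (consecutive terms share only $\X_{it-1}$), so the variance of the sum is of order $T$ times the maximal summand variance, the latter being finite because it involves fourth moments of the $\eta_{it}$, which exist under \ref{C1} as $\theta > 4$. Combining the bias and the two $L^2$ bounds via the triangle inequality delivers $\ex|N_i - 2\sigma^2 D_i| \le \sqrt{\var(N_i)} + 2\sigma^2\sqrt{\var(D_i)} + O(T^{-1}) = O(T^{-1/2})$.

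Finally I would assemble the pieces. On $G_T$,
\[ |\hat{\sigma}^2 - \sigma^2| \le |\countries|^{-1} \sum_{i \in \countries} |\hat{\sigma}_i^2 - \sigma^2| \le \lambda_{\min}^{-1} \, |\countries|^{-1} \sum_{i \in \countries} |N_i - 2\sigma^2 D_i|, \]
and the right-hand side has expectation of order $T^{-1/2}$ by the uniform bound just derived. Markov's inequality then shows the average is $O_p(T^{-1/2})$; since $\pr(G_T) \to 1$, it follows that $|\hat{\sigma}^2 - \sigma^2| = O_p(T^{-1/2})$, which is stronger than the asserted rate because $T^{-1/2} = O(\sqrt{\log p / T})$ (as $\log p$ is bounded away from zero). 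The main obstacle I anticipate is the variance computation for $N_i$: one must carefully track the $1$-dependence of the squared increments and verify that the cross terms involving $r_{it}$ and the $r_{it}^2$ term are genuinely negligible, all while using only the four-plus moments afforded by \ref{C1}. The remaining ingredients — the Rosenthal bound for the denominator and the bookkeeping of the biases — are routine.
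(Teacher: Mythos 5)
Your proof is correct, and it takes a genuinely different route from the paper's. The paper expands $T^{-1}\sum_{t=2}^T(\X_{it}-\X_{it-1})^2$ around $2\sigma^2 T^{-1}\sum_{t}\lambda_i(t/T)$ with five explicit remainder terms and then applies an exponential inequality plus a union bound over $i\in\countries$ to obtain the \emph{uniform} bound $\max_{i\in\countries}|\hat{\sigma}_i^2-\sigma^2|=O_p(\sqrt{\log p/T})$, from which the claim for the average $\hat{\sigma}^2$ is immediate; the $\sqrt{\log p}$ factor is exactly the price of that union bound. You instead use the ratio decomposition $\hat{\sigma}_i^2-\sigma^2=(N_i-2\sigma^2 D_i)/(2D_i)$, invoke uniformity over countries only where it is cheap (the $o_p(1)$ lower bound on the denominators, where your Rosenthal-plus-union-bound step is valid since condition (iii) of Theorem \ref{theo1} gives $|\countries|T^{-\theta/2}\le 2pT^{-\theta/2}=O(T^{-(\theta/2)b-(1+\delta)})\to 0$), and control the numerators in $L^1$ by bias and variance computations, where the $1$-dependence of the squared increments and the existence of fourth moments under \ref{C1} (as $\theta>4$) are precisely what make $\var(N_i)=O(T^{-1})$ work. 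Because $\hat{\sigma}^2$ is an average rather than a maximum over $i$, Markov's inequality then finishes the argument without any maximal inequality across countries, and you obtain the sharper rate $O_p(T^{-1/2})$ with no $\log p$ factor. This is stronger than the statement of Lemma \ref{lemmaS1} (modulo the innocuous caveat that $T^{-1/2}=O(\sqrt{\log p/T})$ presumes $p\ge 2$), and it is all that is needed downstream, since only the pooled estimator $\hat{\sigma}^2$, through the claim \eqref{eq:approxerror:claim3}, enters the proof of \eqref{eq:approxerror1}. What the paper's approach buys instead is the uniform-over-countries conclusion and an exponential-inequality device that is reused elsewhere in the Supplement (e.g.\ for \eqref{eq:approxerror:var}); what yours buys is a more elementary second/fourth-moment argument and a better rate for the quantity actually required.
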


\begin{proof}[\textnormal{\textbf{Proof of Lemma \ref{lemmaS1}.}}]
By definition, $\hat{\sigma}^2 = |\countries|^{-1} \sum_{i \in \countries} \hat{\sigma}_i^2$ and $\hat{\sigma}_i^2 = \{\sum_{t=2}^T (\X_{it}-\X_{it-1})^2\}\{2 \sum_{t=1}^T \X_{it}\}^{-1}$. It holds that 
\begin{equation}\label{eq:approxerror:claim3:expansion}
\frac{1}{T} \sum\limits_{t=2}^T (X_{it} - X_{it-1})^2 = \frac{\sigma^2}{T} \sum\limits_{t=2}^T \lambda_i\Big(\frac{t}{T}\Big) (\eta_{it} - \eta_{it-1})^2 + \big\{R_{i,T}^{(1)} + \ldots + R_{i,T}^{(5)}\big\},
\end{equation}
where
\begin{align*}
R_{i,T}^{(1)} & = \frac{2\sigma}{T} \sum\limits_{t=2}^T \Big( \lambda_i\Big(\frac{t}{T}\Big) - \lambda_i\Big(\frac{t-1}{T}\Big) \Big) \sqrt{\lambda_i\Big(\frac{t}{T}\Big)} (\eta_{it} - \eta_{it-1}) \\
R_{i,T}^{(2)} & = \frac{2\sigma^2}{T} \sum\limits_{t=2}^T \Big( \sqrt{\lambda_i\Big(\frac{t}{T}\Big)} - \sqrt{\lambda_i\Big(\frac{t-1}{T}\Big)} \Big) \sqrt{\lambda_i\Big(\frac{t}{T}\Big)} \eta_{it-1} (\eta_{it} - \eta_{it-1}) \\
R_{i,T}^{(3)} & = \frac{1}{T} \sum\limits_{t=2}^T \Big( \lambda_i\Big(\frac{t}{T}\Big) - \lambda_i\Big(\frac{t-1}{T}\Big) \Big)^2 \\
R_{i,T}^{(4)} & = \frac{2 \sigma}{T} \sum\limits_{t=2}^T \Big( \lambda_i\Big(\frac{t}{T}\Big) - \lambda_i\Big(\frac{t-1}{T}\Big) \Big) \Big( \sqrt{\lambda_i\Big(\frac{t}{T}\Big)} - \sqrt{\lambda_i\Big(\frac{t-1}{T}\Big)} \Big) \eta_{it-1} \\
R_{i,T}^{(5)} & = \frac{\sigma^2}{T} \sum\limits_{t=2}^T  \Big( \sqrt{\lambda_i\Big(\frac{t}{T}\Big)} - \sqrt{\lambda_i\Big(\frac{t-1}{T}\Big)} \Big)^2 \eta_{it-1}^2.
\end{align*} 
With the help of an exponential inequality and standard arguments, it can be shown that 
\[ \max_{i \in \countries} \Big| \frac{1}{T} \sum_{t=2}^T w_i\Big(\frac{t}{T}\Big) \big\{ g(\eta_{it},\eta_{it-1}) - \ex g(\eta_{it},\eta_{it-1}) \big\} \Big| = O_p\Big( \sqrt{\frac{\log p}{T}} \Big), \]
where we let $g(x,y) = x$, $g(x,y) = y$, $g(x,y) = |x|$, $g(x,y) = |y|$, $g(x,y) = x^2$, $g(x,y) = y^2$ or $g(x,y) = xy$, and $w_i(t/T)$ are deterministic weights with the property that $|w_i(t/T)| \le w_{\max} < \infty$ for all $i$, $t$ and $T$ and some positive constant $w_{\max}$. Using this uniform convergence result along with conditions \ref{C2} and \ref{C1}, we obtain that 
\[ \max_{i \in \countries} \Big| \frac{1}{T} \sum\limits_{t=2}^T \lambda_i\Big(\frac{t}{T}\Big) (\eta_{it} - \eta_{it-1})^2 - \frac{2}{T} \sum\limits_{t=1}^T \lambda_i\Big(\frac{t}{T}\Big) \Big| = O_p\Big( \sqrt{\frac{\log p}{T}} \Big) \]
and 
\[ \max_{1 \le \ell \le 5} \max_{i \in \countries} |R_{i,T}^{(\ell)}| = O_p(T^{-1}). \]
Applying these two statements to \eqref{eq:approxerror:claim3:expansion}, we can infer that
\begin{equation}\label{eq:approxerror:claim2:1}
\max_{i \in \countries} \Big| \frac{1}{T} \sum\limits_{t=2}^T (X_{it} - X_{it-1})^2 - \frac{2\sigma^2}{T} \sum\limits_{t=1}^T \lambda_i\Big(\frac{t}{T}\Big) \Big| = O_p\Big( \sqrt{\frac{\log p}{T}} \Big). 
\end{equation}
By similar but simpler arguments, we additionally get that 
\begin{equation}\label{eq:approxerror:claim2:2}
\max_{i \in \countries} \Big| \frac{1}{T} \sum\limits_{t=1}^T X_{it} - \frac{1}{T} \sum\limits_{t=1}^T \lambda_i\Big(\frac{t}{T}\Big) \Big| = O_p\Big( \sqrt{\frac{\log p}{T}} \Big). 
\end{equation}
From \eqref{eq:approxerror:claim2:1} and \eqref{eq:approxerror:claim2:2}, it follows that $\max_{i \in \countries} |\hat{\sigma}_i^2 - \sigma^2| = O_p(\sqrt{\log p / T})$, which in turn implies that $|\hat{\sigma}^2 - \sigma^2| = O_p(\sqrt{\log p / T})$ as well. 
\end{proof}

\begin{proof}[\textnormal{\textbf{Proof of (\ref{eq:approxerror1}).}}] 
Since 
\begin{align*}
\big| \hat{\Psi}_T - \Psi_T \big| 
 & \le \max_{(i,j,k) \in \indexset} a_k \big| \hat{\psi}_{ijk,T}^0 - \psi_{ijk,T}^0 \big| \\
 & \le \max_{1 \le k \le K} a_k \max_{(i,j,k) \in \indexset} \big| \hat{\psi}_{ijk,T}^0 - \psi_{ijk,T}^0 \big| \\
 & \le C \sqrt{\log T} \max_{(i,j,k) \in \indexset} \big| \hat{\psi}_{ijk,T}^0 - \psi_{ijk,T}^0 \big|, 
\end{align*}
it suffices to prove that 
\begin{equation}\label{eq:approxerror2}
\max_{(i,j,k) \in \indexset} \big| \hat{\psi}_{ijk,T}^0 - \psi_{ijk,T}^0 \big| = o_p\Big(\frac{r_T}{\sqrt{\log T}}\Big).
\end{equation}
To start with, we reformulate $\hat{\psi}_{ijk,T}^0$ as
\[ \hat{\psi}_{ijk,T}^0 = \hat{\psi}_{ijk,T}^* + \Big( \frac{\sigma}{\hat{\sigma}} - 1 \Big) \hat{\psi}_{ijk,T}^*, \]
where 
\[ \hat{\psi}_{ijk,T}^* =  \frac{\sum\nolimits_{t=1}^T \ind(\frac{t}{T} \in \mathcal{I}_k) \overline{\lambda}_{ij}^{1/2}(\frac{t}{T}) (\eta_{it} - \eta_{jt})}{ \{ \sum\nolimits_{t=1}^T \ind(\frac{t}{T} \in \mathcal{I}_k) (\X_{it} + \X_{jt}) \}^{1/2}}. \]
With this notation, we can establish the bound 
\begin{align*}
\max_{(i,j,k) \in \indexset} \big| \hat{\psi}_{ijk,T}^0 - \psi_{ijk,T}^0 \big| 
 & \le \max_{(i,j,k) \in \indexset} \big| \hat{\psi}_{ijk,T}^* - \psi_{ijk,T}^0 \big| \\
 & \quad + \Big| \frac{\sigma}{\hat{\sigma}} - 1 \Big| \max_{(i,j,k) \in \indexset} \big| \hat{\psi}_{ijk,T}^* - \psi_{ijk,T}^0 \big| \\
 & \quad + \Big| \frac{\sigma}{\hat{\sigma}} - 1 \Big| \max_{(i,j,k) \in \indexset} \big|\psi_{ijk,T}^0\big|, 
\end{align*}
which shows that \eqref{eq:approxerror2} is implied by the three statements 
\begin{align} 
\max_{(i,j,k) \in \indexset} \big| \hat{\psi}_{ijk,T}^* - \psi_{ijk,T}^0 \big| & = O_p \Big( \frac{\log p}{\sqrt{T h_{\min}}} + h_{\max} \sqrt{\log p} \Big) \label{eq:approxerror:claim1} \\
\max_{(i,j,k) \in \indexset} \big|\psi_{ijk,T}^0\big| & = O_p\big(\sqrt{\log p}\big) \label{eq:approxerror:claim2} \\
\big| \hat{\sigma}^2 - \sigma^2 \big| & = O_p\Big( \sqrt{\frac{\log p}{T}} \Big). \label{eq:approxerror:claim3} 
\end{align}
Since \eqref{eq:approxerror:claim3} has already been verified in Lemma \ref{lemmaS1}, it remains to prove the statements \eqref{eq:approxerror:claim1} and \eqref{eq:approxerror:claim2}.

We start with the proof of \eqref{eq:approxerror:claim2}. Applying an exponential inequality along with standard arguments yields that 
\begin{equation}\label{eq:approxerror:var}
\max_{i \in \countries} \max_{1 \le k \le K} \Big| \frac{1}{\sqrt{Th_k}} \sum\limits_{t=1}^T \ind\Big(\frac{t}{T} \in \mathcal{I}_k\Big) w_i\Big(\frac{t}{T}\Big) \eta_{it} \Big| = O_p \big( \sqrt{\log p} \big),
\end{equation}
where $w_i(t/T)$ are general deterministic weights with the property that $|w_i(t/T)| \le w_{\max} < \infty$ for all $i$, $t$ and $T$ and some positive constant $w_{\max}$. This immediately implies \eqref{eq:approxerror:claim2}.

We next turn to the proof of \eqref{eq:approxerror:claim1}. As the functions $\lambda_i$ are uniformly Lipschitz continuous by \ref{C2}, it can be shown that 
\begin{equation}\label{eq:approxerror:bias}
\max_{i \in \countries} \max_{1 \le k \le K} \Big| \frac{1}{Th_k} \sum\limits_{t=1}^T \ind\Big(\frac{t}{T} \in \mathcal{I}_k\Big) \lambda_i \Big(\frac{t}{T}\Big) - \frac{1}{h_k} \int_{w \in \mathcal{I}_k} \lambda_i(w) dw \Big| \le \frac{C}{T h_{\min}}.
\end{equation}
From this, the uniform convergence result \eqref{eq:approxerror:var} and condition \ref{C2}, we can infer that 
\begin{align}
 \max_{(i,j,k) \in \indexset} \Big| & \frac{1}{Th_k} \sum\limits_{t=1}^T \ind\Big(\frac{t}{T} \in \mathcal{I}_k\Big) (\X_{it} + \X_{jt}) \nonumber \\* & - \frac{1}{h_k} \int_{w \in \mathcal{I}_k} \big\{ \lambda_i(w) + \lambda_j(w) \big\} dw \Big| = O_p \Big( \sqrt{\frac{\log p}{T h_{\min}}} \Big) \label{eq:approxerror:claim1:infer1} 
\end{align}
and
\begin{align}
 \max_{(i,j,k) \in \indexset} \Big| & \frac{1}{\sqrt{Th_k}} \sum\limits_{t=1}^T \ind\Big(\frac{t}{T} \in \mathcal{I}_k\Big) \overline{\lambda}_{ij}^{1/2}\Big(\frac{t}{T}\Big) (\eta_{it} - \eta_{jt}) \nonumber \\* & - \Big\{\frac{\int_{w \in \mathcal{I}_k} \overline{\lambda}_{ij}(w) dw}{h_k}\Big\}^{1/2} \frac{1}{\sqrt{Th_k}} \sum\limits_{t=1}^T \ind\Big(\frac{t}{T} \in \mathcal{I}_k\Big) (\eta_{it} - \eta_{jt}) \Big| \nonumber \\ & \hspace{6.5cm} = O_p \Big( h_{\max} \sqrt{\log p} \Big). \label{eq:approxerror:claim1:infer2}
\end{align}
The claim \eqref{eq:approxerror:claim1} follows from \eqref{eq:approxerror:claim1:infer1} and \eqref{eq:approxerror:claim1:infer2} along with straightforward calculations.
\end{proof}

\begin{proof}[\textnormal{\textbf{Proof of (\ref{eq:quant-exact}).}}]
The proof is by contradiction. Suppose that \eqref{eq:quant-exact} does not hold true, that is, $\pr(\Phi_T \le q_{T,\text{Gauss}}(\alpha)) = 1-\alpha + \xi$ for some $\xi > 0$. By Nazarov's inequality, 
\begin{equation*}
\pr \big(\Phi_T \le q_{T,\text{Gauss}}(\alpha)\big) - \pr\big(\Phi_T \le q_{T,\text{Gauss}}(\alpha) - \eta\big) \le C \eta \sqrt{\log (2p)} 
\end{equation*}
for any $\eta > 0$ with $C$ depending only on the parameter $\delta$ specified in condition (a) of Proposition \ref{prop:Chernozhukov}. Hence, 
\begin{align*}
\pr \big(\Phi_T \le q_{T,\text{Gauss}}(\alpha) - \eta\big) 
 & \ge \pr\big(\Phi_T \le q_{T,\text{Gauss}}(\alpha) \big) - C \eta \sqrt{\log(2p)} \\
 & = 1-\alpha + \xi - C \eta \sqrt{\log(2p)} > 1-\alpha
\end{align*}
for $\eta > 0$ sufficiently small. This contradicts the definition of the quantile $q_{T,\text{Gauss}}(\alpha)$ according to which $q_{T,\text{Gauss}}(\alpha) = \inf_{q \in \reals} \{ \pr(\Phi_T \le q) \ge 1-\alpha \}$. 
\end{proof}

\newpage
\subsection{Additional graphs for Section \ref{subsec:app}}\label{s:subsec:app}

Here, we provide the pairwise comparisons between Italy, France, Spain and the UK that were omitted in Section \ref{subsec:app}. The plots have the same format as Figures \ref{fig:Germany:Italy}--\ref{fig:Germany:UK}.

\begin{figure}[h!]
\begin{minipage}[t]{0.49\textwidth}
\includegraphics[width=\textwidth]{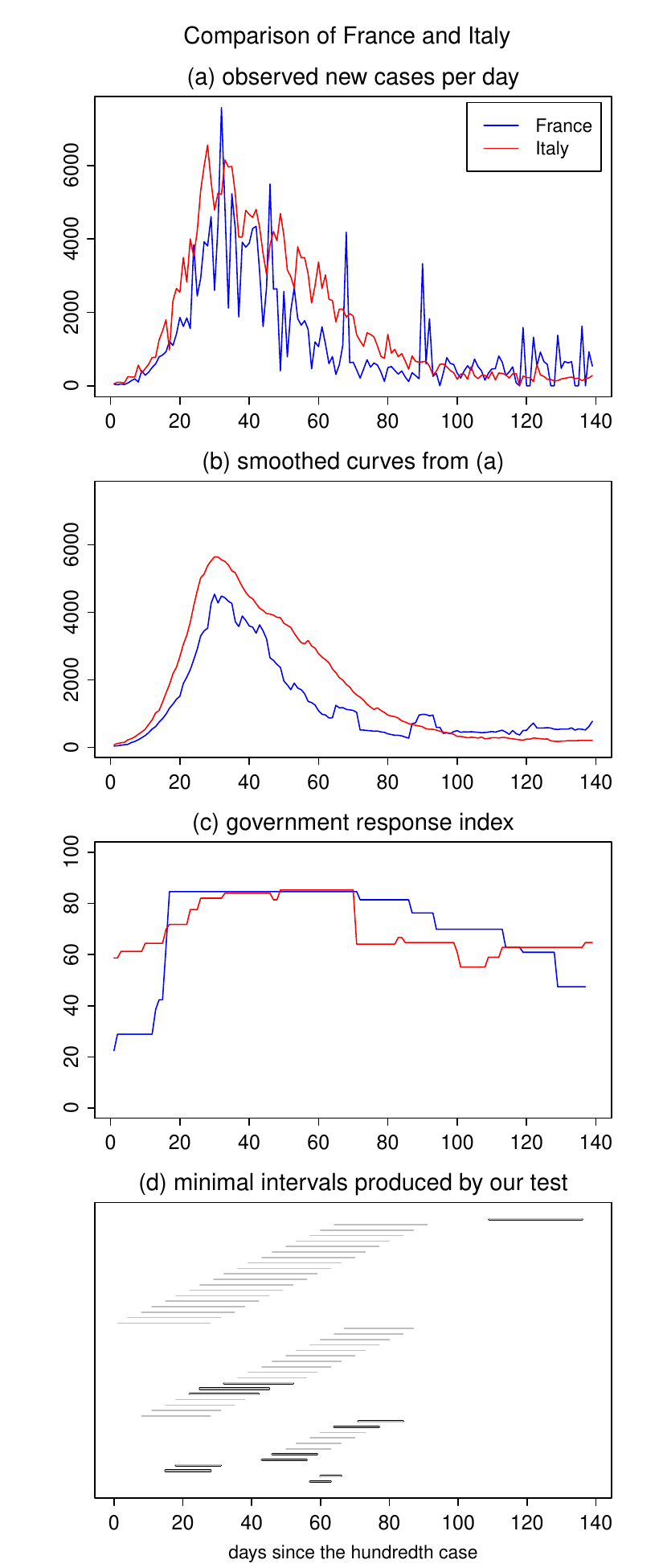}
\caption{Test results for the comparison of France and Italy.}
\end{minipage}
\hspace{0.25cm}
\begin{minipage}[t]{0.49\textwidth}
\includegraphics[width=\textwidth]{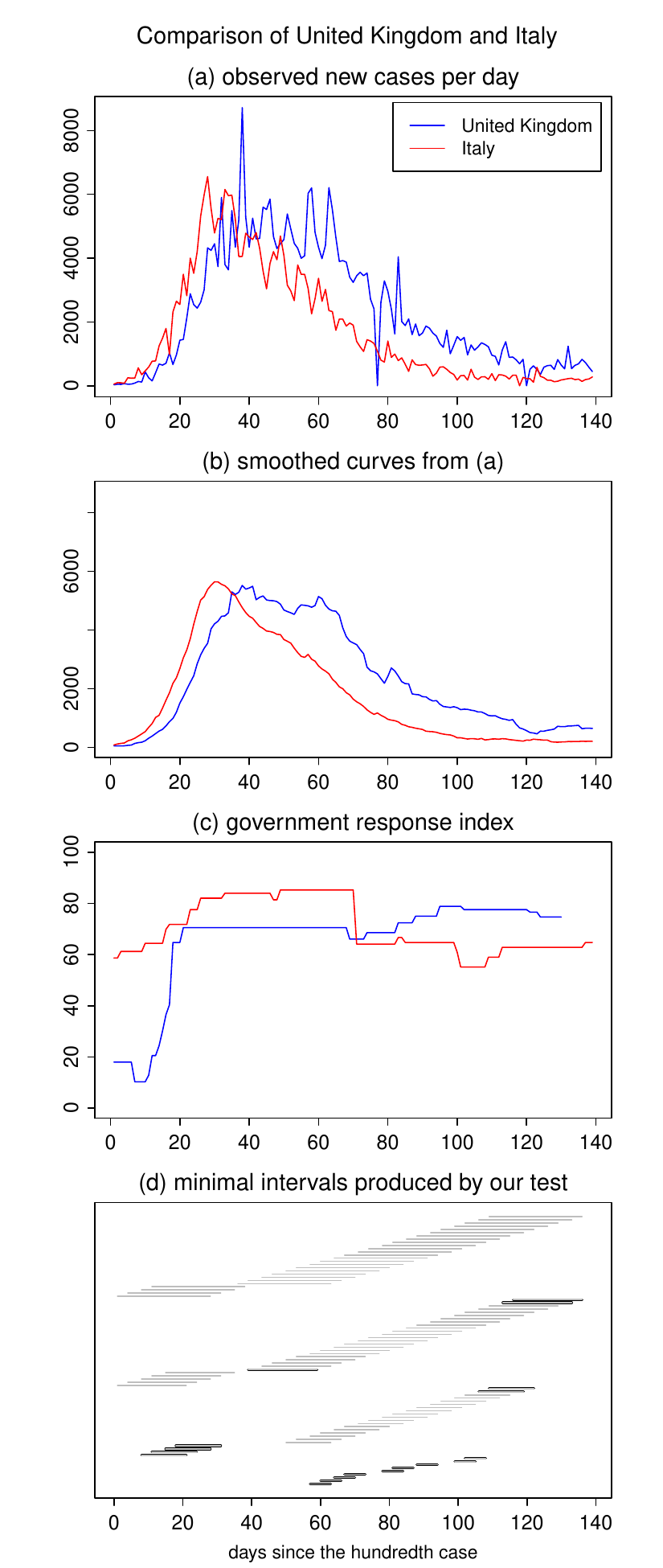}
\caption{Test results for the comparison of the UK and Italy.}
\end{minipage}
\end{figure}

\begin{figure}[p!]
\begin{minipage}[t]{0.49\textwidth}
\includegraphics[width=\textwidth]{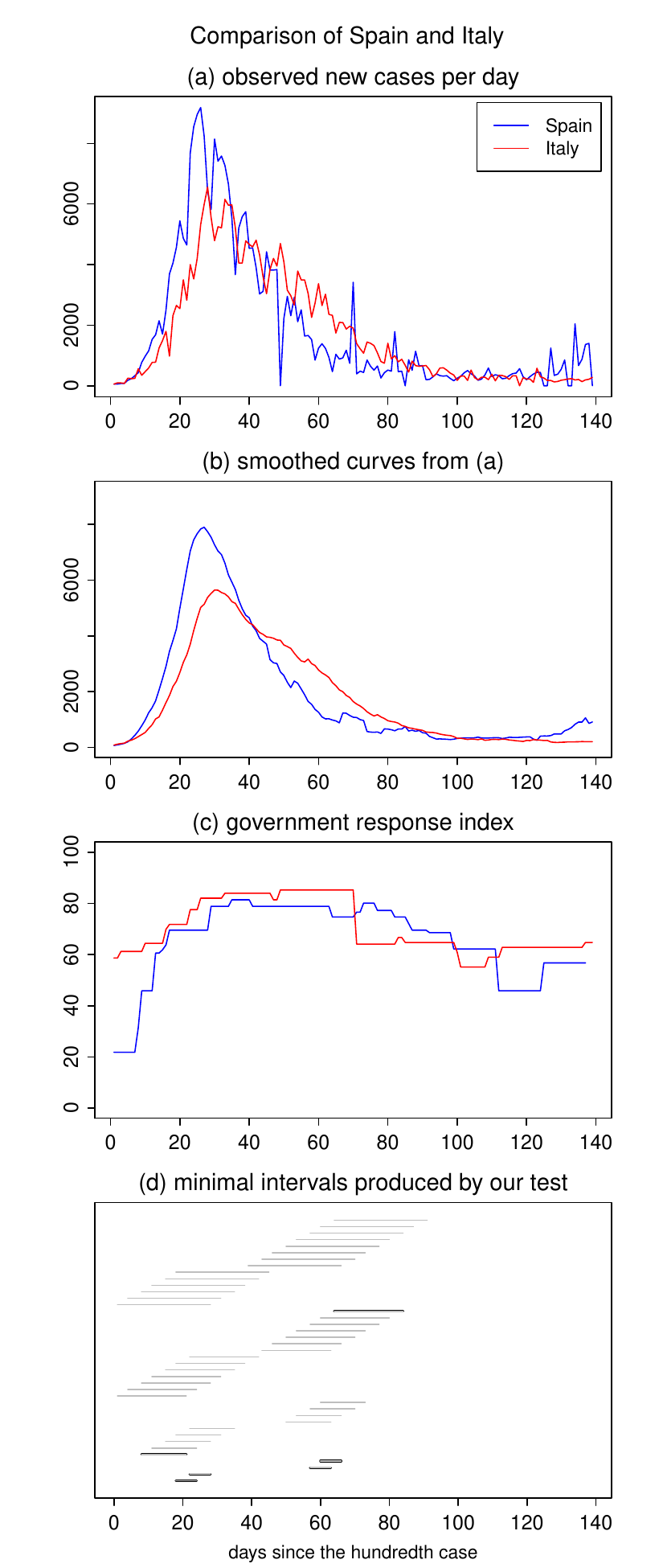}
\caption{Test results for the comparison of Spain and Italy.}
\end{minipage}
\hspace{0.25cm}
\begin{minipage}[t]{0.49\textwidth}
\includegraphics[width=\textwidth]{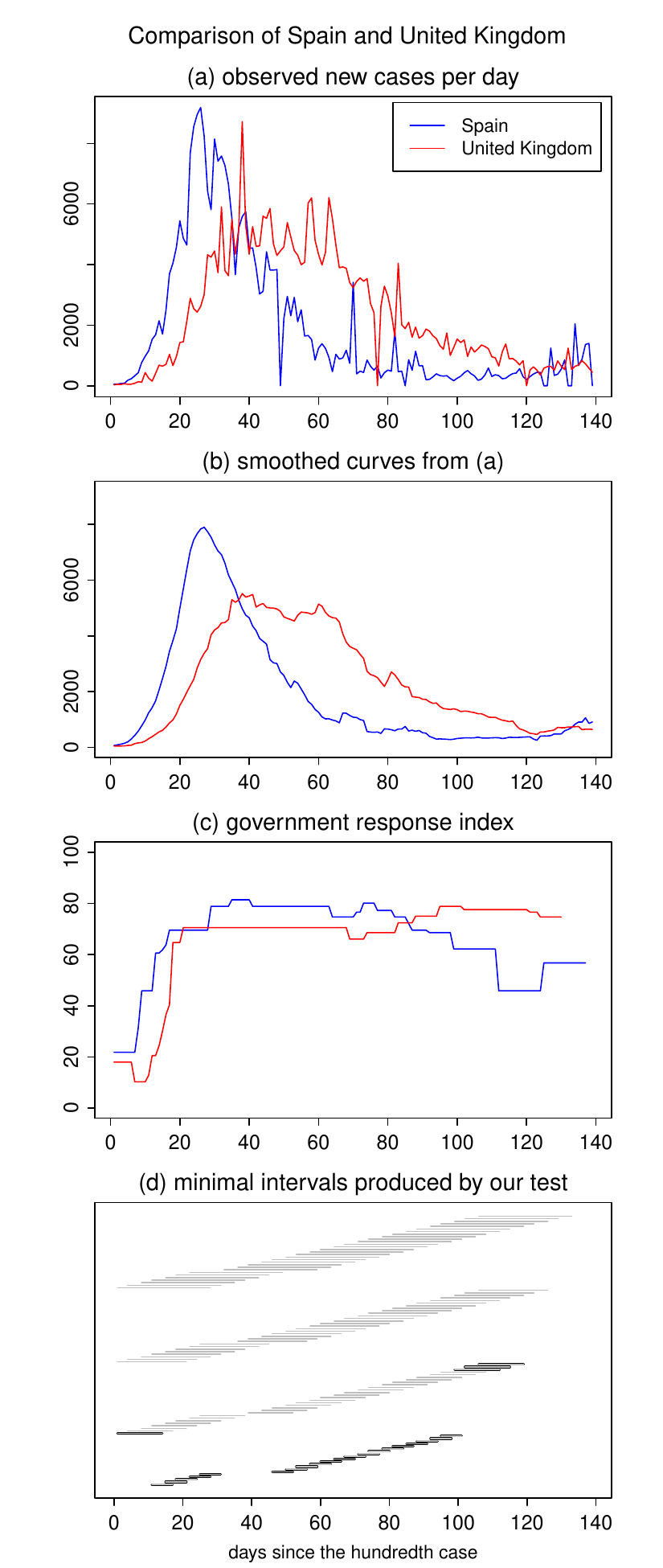}
\caption{Test results for the comparison of Spain and the UK.}
\end{minipage}
\end{figure}

\begin{figure}[p!]
\begin{minipage}[t]{0.49\textwidth}
\includegraphics[width=\textwidth]{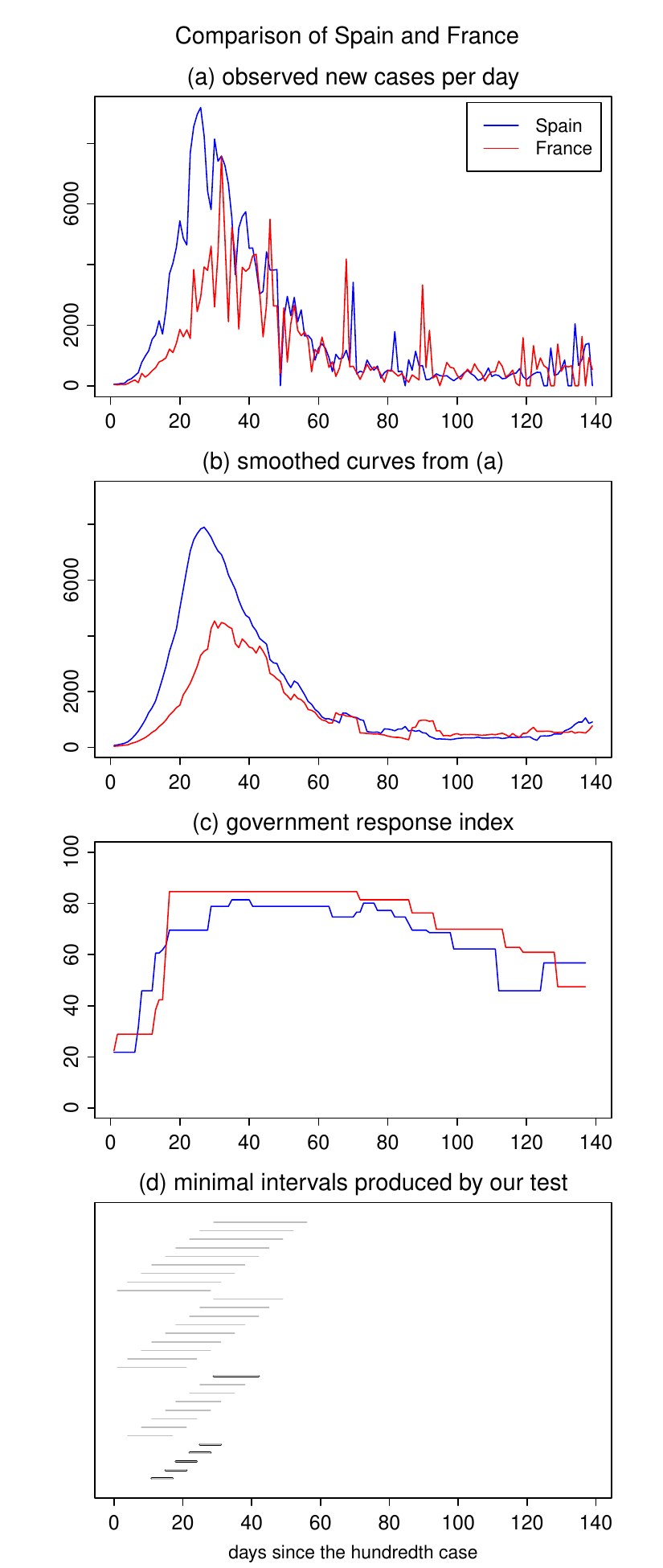}
\caption{Test results for the comparison of Spain and France.}
\end{minipage}
\hspace{0.25cm}
\begin{minipage}[t]{0.49\textwidth}
\includegraphics[width=\textwidth]{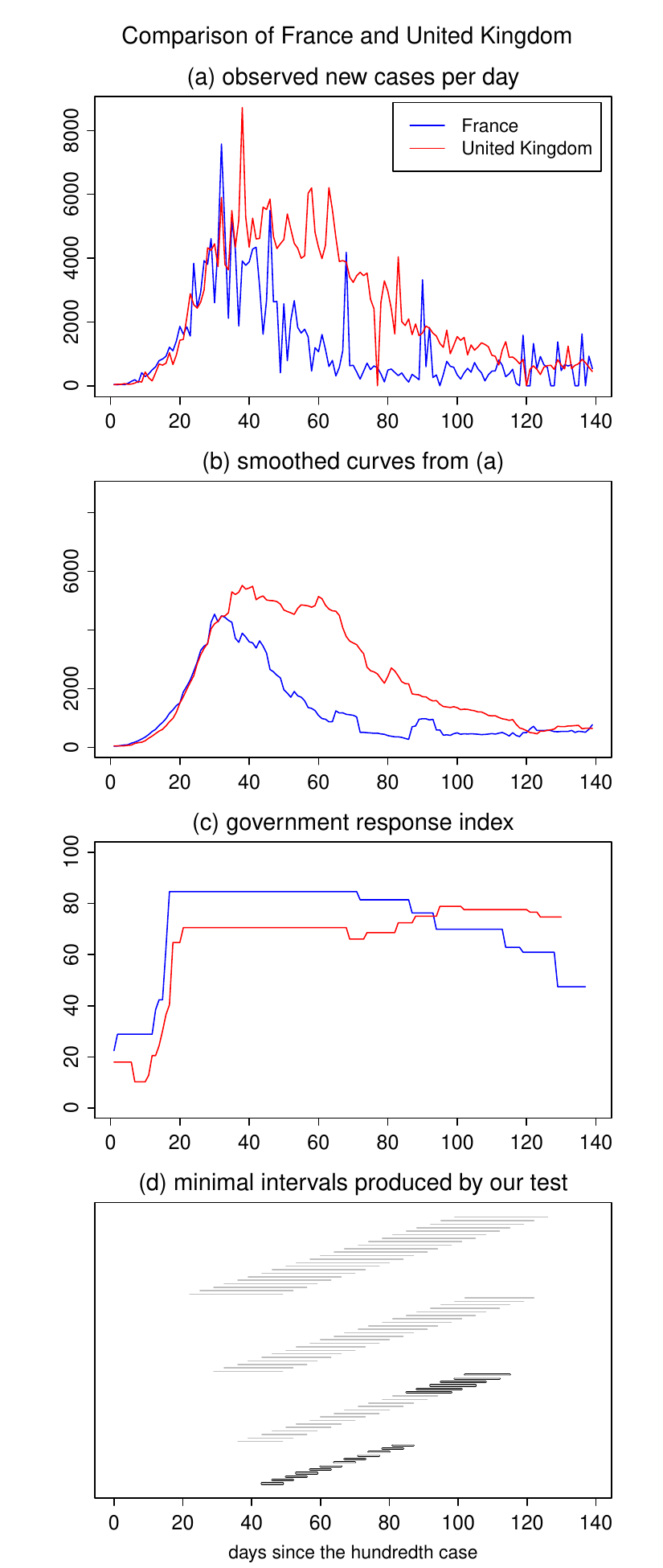}
\caption{Test results for the comparison of France and the UK.}
\end{minipage}
\end{figure}

\newpage
\subsection{Robustness checks for Section \ref{subsec:sim}}\label{s:subsec:robustness}

In what follows, we supplement the simulation experiments of Section \ref{subsec:sim} by some robustness checks. Specifically, we repeat the experiments with different values of the overdispersion parameter $\sigma$. The larger we choose $\sigma$, the more noise we put on top of the time trend, that is, on top of the underlying signal. Hence, by varying $\sigma$, we can assess how sensitive our test is to changes in the noise-to-signal ratio. We first repeat the size simulations for $\sigma = 10$ and $\sigma = 20$. The results are presented in Tables \ref{s:tab:sim:size:1} and \ref{s:tab:sim:size:2}, respectively. As can be seen, the empirical size numbers are very similar to those for $\sigma=15$ in Table \ref{tab:sim:size}. We next rerun the power simulations for $\sigma = 10$ and $\sigma = 20$, where we consider the two Scenarios A and B as in Section \ref{subsec:sim}. The results can be found in Tables \ref{s:tab:sim:power:1}--\ref{s:tab:sim:power:4}. They show that the test is much more powerful for $\sigma=10$ than for $\sigma=20$. This is what one would expect, since a higher value of $\sigma$ corresponds to a higher noise-to-signal ratio. In particular, the higher $\sigma$, the more noisy the data, and thus the more difficult it is to identify differences between the trend curves. Nevertheless, even in the very noisy case with $\sigma = 20$, our test has quite some power, which tends to increase swiftly as $T$ gets larger. 
\vspace{0.4cm}

\begin{table}[h!]
\footnotesize{
\caption{Empirical size of the test for $\sigma = 10$.}\label{s:tab:sim:size:1}
\newcolumntype{C}[1]{>{\hsize=#1\hsize\centering\arraybackslash}X}
\newcolumntype{Z}{>{\centering\arraybackslash}X}
\begin{tabularx}{\textwidth}{l Z@{\hskip 6pt}Z@{\hskip 6pt}Z Z@{\hskip 6pt}Z@{\hskip 6pt}Z Z@{\hskip 6pt}Z@{\hskip 6pt}Z} 
\toprule
 & \multicolumn{3}{c}{$n = 5$} & \multicolumn{3}{c}{$n = 10$} & \multicolumn{3}{c}{$n = 50$} \\
\cmidrule[0.4pt]{2-4} \cmidrule[0.4pt]{5-7} \cmidrule[0.4pt]{8-10}
 & \multicolumn{3}{c}{significance level $\alpha$} &\multicolumn{3}{c}{significance level $\alpha$} & \multicolumn{3}{c}{significance level $\alpha$} \\
 & 0.01 & 0.05 & 0.1  &  0.01 & 0.05 & 0.1  &  0.01 & 0.05 & 0.1 \\
\cmidrule[0.4pt]{1-10}
 $T = 100$ & 0.009 & 0.043 & 0.085 & 0.008 & 0.039 & 0.075 & 0.005 & 0.023 & 0.055 \\ 
  $T = 250$ & 0.011 & 0.047 & 0.095 & 0.010 & 0.050 & 0.094 & 0.009 & 0.039 & 0.079 \\ 
  $T = 500$ & 0.009 & 0.052 & 0.101 & 0.013 & 0.049 & 0.101 & 0.010 & 0.039 & 0.084 \\ 
\bottomrule
\end{tabularx}}
\end{table}

\begin{table}[h!]
\footnotesize{
\caption{Empirical size of the test for $\sigma = 20$.}\label{s:tab:sim:size:2}
\newcolumntype{C}[1]{>{\hsize=#1\hsize\centering\arraybackslash}X}
\newcolumntype{Z}{>{\centering\arraybackslash}X}
\begin{tabularx}{\textwidth}{l Z@{\hskip 6pt}Z@{\hskip 6pt}Z Z@{\hskip 6pt}Z@{\hskip 6pt}Z Z@{\hskip 6pt}Z@{\hskip 6pt}Z} 
\toprule
 & \multicolumn{3}{c}{$n = 5$} & \multicolumn{3}{c}{$n = 10$} & \multicolumn{3}{c}{$n = 50$} \\
\cmidrule[0.4pt]{2-4} \cmidrule[0.4pt]{5-7} \cmidrule[0.4pt]{8-10}
 & \multicolumn{3}{c}{significance level $\alpha$} &\multicolumn{3}{c}{significance level $\alpha$} & \multicolumn{3}{c}{significance level $\alpha$} \\
 & 0.01 & 0.05 & 0.1  &  0.01 & 0.05 & 0.1  &  0.01 & 0.05 & 0.1 \\
\cmidrule[0.4pt]{1-10}
$T = 100$ & 0.011 & 0.050 & 0.094 & 0.010 & 0.047 & 0.092 & 0.009 & 0.034 & 0.070 \\ 
  $T = 250$ & 0.009 & 0.047 & 0.088 & 0.008 & 0.044 & 0.085 & 0.006 & 0.032 & 0.062 \\ 
  $T = 500$ & 0.008 & 0.038 & 0.081 & 0.006 & 0.039 & 0.079 & 0.006 & 0.025 & 0.060 \\ 
\bottomrule
\end{tabularx}}
\end{table}

\pagebreak

\begin{table}[t!]
\footnotesize{
\caption{Power of the test in Scenario A for $\sigma = 10$.}\label{s:tab:sim:power:1}
\newcolumntype{C}[1]{>{\hsize=#1\hsize\centering\arraybackslash}X}
\newcolumntype{Z}{>{\centering\arraybackslash}X}
\begin{tabularx}{\textwidth}{l Z@{\hskip 6pt}Z@{\hskip 6pt}Z Z@{\hskip 6pt}Z@{\hskip 6pt}Z Z@{\hskip 6pt}Z@{\hskip 6pt}Z} 
\toprule
 & \multicolumn{3}{c}{$n = 5$} & \multicolumn{3}{c}{$n = 10$} & \multicolumn{3}{c}{$n = 50$} \\
\cmidrule[0.4pt]{2-4} \cmidrule[0.4pt]{5-7} \cmidrule[0.4pt]{8-10}
 & \multicolumn{3}{c}{significance level $\alpha$} &\multicolumn{3}{c}{significance level $\alpha$} & \multicolumn{3}{c}{significance level $\alpha$} \\
 & 0.01 & 0.05 & 0.1  &  0.01 & 0.05 & 0.1  &  0.01 & 0.05 & 0.1 \\
\cmidrule[0.4pt]{1-10}
$T = 100$ & 0.836 & 0.915 & 0.911 & 0.833 & 0.903 & 0.898 & 0.777 & 0.874 & 0.882 \\ 
  $T = 250$ & 0.986 & 0.971 & 0.938 & 0.984 & 0.956 & 0.918 & 0.980 & 0.961 & 0.924 \\ 
  $T = 500$ & 0.996 & 0.975 & 0.946 & 0.994 & 0.965 & 0.927 & 0.992 & 0.963 & 0.918 \\ 
\bottomrule
\end{tabularx}}
\vspace{0.5cm}
\end{table}

\begin{table}[t!]
\footnotesize{
\caption{Power of the test in Scenario A for $\sigma = 20$.}\label{s:tab:sim:power:2}
\newcolumntype{C}[1]{>{\hsize=#1\hsize\centering\arraybackslash}X}
\newcolumntype{Z}{>{\centering\arraybackslash}X}
\begin{tabularx}{\textwidth}{l Z@{\hskip 6pt}Z@{\hskip 6pt}Z Z@{\hskip 6pt}Z@{\hskip 6pt}Z Z@{\hskip 6pt}Z@{\hskip 6pt}Z} 
\toprule
 & \multicolumn{3}{c}{$n = 5$} & \multicolumn{3}{c}{$n = 10$} & \multicolumn{3}{c}{$n = 50$} \\
\cmidrule[0.4pt]{2-4} \cmidrule[0.4pt]{5-7} \cmidrule[0.4pt]{8-10}
 & \multicolumn{3}{c}{significance level $\alpha$} &\multicolumn{3}{c}{significance level $\alpha$} & \multicolumn{3}{c}{significance level $\alpha$} \\
 & 0.01 & 0.05 & 0.1  &  0.01 & 0.05 & 0.1  &  0.01 & 0.05 & 0.1 \\
\cmidrule[0.4pt]{1-10}
$T = 100$ & 0.144 & 0.275 & 0.352 & 0.115 & 0.231 & 0.304 & 0.048 & 0.120 & 0.163 \\ 
  $T = 250$ & 0.244 & 0.434 & 0.538 & 0.204 & 0.403 & 0.486 & 0.133 & 0.247 & 0.305 \\ 
  $T = 500$ & 0.296 & 0.563 & 0.662 & 0.273 & 0.511 & 0.603 & 0.175 & 0.338 & 0.433 \\ 
\bottomrule
\end{tabularx}}
\vspace{0.5cm}
\end{table}

\begin{table}[t!]
\footnotesize{
\caption{Power of the test in Scenario B for $\sigma = 10$.}\label{s:tab:sim:power:3}
\newcolumntype{C}[1]{>{\hsize=#1\hsize\centering\arraybackslash}X}
\newcolumntype{Z}{>{\centering\arraybackslash}X}
\begin{tabularx}{\textwidth}{l Z@{\hskip 6pt}Z@{\hskip 6pt}Z Z@{\hskip 6pt}Z@{\hskip 6pt}Z Z@{\hskip 6pt}Z@{\hskip 6pt}Z} 
\toprule
 & \multicolumn{3}{c}{$n = 5$} & \multicolumn{3}{c}{$n = 10$} & \multicolumn{3}{c}{$n = 50$} \\
\cmidrule[0.4pt]{2-4} \cmidrule[0.4pt]{5-7} \cmidrule[0.4pt]{8-10}
 & \multicolumn{3}{c}{significance level $\alpha$} &\multicolumn{3}{c}{significance level $\alpha$} & \multicolumn{3}{c}{significance level $\alpha$} \\
 & 0.01 & 0.05 & 0.1  &  0.01 & 0.05 & 0.1  &  0.01 & 0.05 & 0.1 \\
\cmidrule[0.4pt]{1-10}
$T = 100$ & 0.991 & 0.973 & 0.946 & 0.994 & 0.970 & 0.935 & 0.994 & 0.971 & 0.940 \\ 
  $T = 250$ & 0.993 & 0.969 & 0.941 & 0.993 & 0.959 & 0.919 & 0.991 & 0.960 & 0.925 \\ 
  $T = 500$ & 0.996 & 0.976 & 0.948 & 0.993 & 0.966 & 0.928 & 0.993 & 0.962 & 0.917 \\ 
\bottomrule
\end{tabularx}}
\vspace{0.5cm}
\end{table}

\begin{table}[t!]
\footnotesize{
\caption{Power of the test in Scenario B for $\sigma = 20$.}\label{s:tab:sim:power:4}
\newcolumntype{C}[1]{>{\hsize=#1\hsize\centering\arraybackslash}X}
\newcolumntype{Z}{>{\centering\arraybackslash}X}
\begin{tabularx}{\textwidth}{l Z@{\hskip 6pt}Z@{\hskip 6pt}Z Z@{\hskip 6pt}Z@{\hskip 6pt}Z Z@{\hskip 6pt}Z@{\hskip 6pt}Z} 
\toprule
 & \multicolumn{3}{c}{$n = 5$} & \multicolumn{3}{c}{$n = 10$} & \multicolumn{3}{c}{$n = 50$} \\
\cmidrule[0.4pt]{2-4} \cmidrule[0.4pt]{5-7} \cmidrule[0.4pt]{8-10}
 & \multicolumn{3}{c}{significance level $\alpha$} &\multicolumn{3}{c}{significance level $\alpha$} & \multicolumn{3}{c}{significance level $\alpha$} \\
 & 0.01 & 0.05 & 0.1  &  0.01 & 0.05 & 0.1  &  0.01 & 0.05 & 0.1 \\
\cmidrule[0.4pt]{1-10}
$T = 100$ & 0.438 & 0.636 & 0.704 & 0.404 & 0.598 & 0.669 & 0.277 & 0.449 & 0.526 \\ 
  $T = 250$ & 0.864 & 0.934 & 0.927 & 0.850 & 0.923 & 0.915 & 0.811 & 0.891 & 0.898 \\ 
  $T = 500$ & 0.960 & 0.968 & 0.949 & 0.961 & 0.964 & 0.935 & 0.945 & 0.961 & 0.941 \\ 
\bottomrule
\end{tabularx}}
\end{table}

\phantom{end}

\end{document}